\setlist{noitemsep,leftmargin=*}
\newcommand{\dimN}{n}
\newcommand{\dimM}{m}
\newcommand{\dimK}{k}
\newcommand{\dnzc}{\id{nzc}}
\newcommand{\dnzr}{\id{nzr}}
\newcommand{\dni}{\id{ni}}
\newcommand{\dnnz}{\id{nnz}}
\newcommand{\dmax}{\func{max}}
\newcommand{\dth}{th}
\newcommand{\dlen}{\id{len}}
\newcommand{\NUM}{\mathsf{NUM}}
\newcommand{\IR}{\mathsf{IR}}
\newcommand{\JC}{\mathsf{JC}}
\newcommand{\CP}{\mathsf{CP}}
\newcommand{\AUX}{\mathsf{AUX}}
\newcommand{\qC}{\c{C}}
\newcommand{\matlab}{{\sc Matlab}}
\newcommand{\mA}{\mathbf{A}} 
\newcommand{\mB}{\mathbf{B}}
\newcommand{\transpose}     {^{\mbox{\scriptsize \sf T}}}
\newcommand{\mC}{\mathbf{C}}
\newcommand{\flops}{\mathrm{flops}}
\newcommand{\rmat}{R-MAT}
\newcommand{\erdosrenyi}{Erd\H os-R\'{e}nyi}
\def\Cpp{C{}\texttt{++}~}
\newcommand{\lilabel}[1]        {\label{li:#1}}
\newcommand{\liref}[1]      {line~\ref{li:#1}}
\newcommand{\lirefs}[2]     {lines \ref{li:#1}--\ref{li:#2}}
\title{ Parallel Sparse Matrix-Matrix Multiplication and Indexing: 
	Implementation and Experiments 
        \thanks{This work was supported in part by NSF grant CNS-0709385, and by grants from Intel Corporation and Microsoft Corporation.
        All authors from LBNL were supported by the ASCR Office in the DOE Office of Science under contract number DE-AC02-05CH11231.}}
\author{Aydin Bulu\qC\ \thanks{Computational Research Division, Lawrence Berkeley National Laboratory,
1 Cyclotron Road, Berkeley, CA 94720 ({\tt abuluc@lbl.gov}).}
        \and John R. Gilbert \thanks{Computer Science Department,
University of California,
Santa Barbara, CA 93106-5110 ({\tt gilbert@cs.ucsb.edu}).}}
\begin{document}

\maketitle

\begin{abstract}
Generalized sparse matrix-matrix multiplication (or SpGEMM) is a
key primitive for many high performance graph algorithms as well 
as for some linear solvers, such as algebraic multigrid. 
Here we show that SpGEMM also yields efficient algorithms for 
general sparse-matrix indexing in distributed memory, 
provided that the underlying SpGEMM implementation
is sufficiently flexible and scalable.
We demonstrate that our parallel SpGEMM methods,
which use two-dimensional block data distributions with
serial hypersparse kernels, 
are indeed highly flexible, scalable, and memory-efficient
in the general case.
This algorithm is the first to yield increasing speedup on an 
unbounded number of processors;
our experiments show scaling up to thousands of processors in a
variety of test scenarios. 
\end{abstract}

\begin{keywords} 
Parallel computing, numerical linear algebra, sparse matrix-matrix multiplication, 
SpGEMM, sparse matrix indexing, sparse matrix assignment, 2D data decomposition, hypersparsity, graph algorithms, 
sparse SUMMA, subgraph extraction, graph contraction, graph batch update.
\end{keywords}

\begin{AMS}
05C50, 05C85, 65F50, 68W10
\end{AMS}

\pagestyle{myheadings}
\thispagestyle{plain}
\markboth{BULU\qC\ AND GILBERT}{SPARSE MATRIX INDEXING AND MULTIPLICATION}

\section{Introduction}
We describe scalable parallel implementations of two sparse matrix kernels.
The first, SpGEMM, computes the product of two sparse matrices over a general semiring.
The second, SpRef, performs generalized indexing into a sparse matrix: 
Given vectors $\mathsf{I}$ and $\mathsf{J}$ of row and column indices, 
SpRef extracts the submatrix $\mA(\mathsf{I},\mathsf{J})$.  
Our novel approach to SpRef uses SpGEMM as its key subroutine, 
which regularizes the computation and data access patterns;
conversely, applying SpGEMM to SpRef emphasizes the importance
of an SpGEMM implementation that handles arbitrary matrix
shapes and sparsity patterns, and a complexity analysis
that applies to the general case.

Our main contributions in this paper are: 
first, we show that SpGEMM leads to a simple and efficient implementation
of SpRef; second, we describe a distributed-memory implementation of SpGEMM 
that is more general in application and more flexible in processor 
layout than before; and, third, we report on extensive experiments with the performance 
of SpGEMM and SpRef. We also describe an algorithm for sparse matrix assignment (SpAsgn), 
and report its parallel performance.  The SpAsgn operation, formally $\mA(\mathsf{I},\mathsf{J}) = \mB$, assigns a sparse matrix to a submatrix of another sparse matrix.
It can be used to perform streaming batch updates to a graph. 

Parallel algorithms for SpGEMM and SpRef, as well as their theoretical performance, are described in Sections~\ref{sec:spgemm} and~\ref{sec:spref}. 
We present the general SpGEMM algorithm and its parallel complexity before SpRef since the latter uses SpGEMM as a subroutine and
its analysis uses results from the SpGEMM analysis. 
Section~\ref{sec:parspgemm} summarizes our earlier results on the complexity of various SpGEMM algorithms on distributed memory. 
Section~\ref{sec:sparsesumma} presents our algorithm of choice, Sparse SUMMA, in a more formal way than before, including a pseudocode 
general enough to handle different blocking parameters, rectangular matrices, and rectangular processor grids. 
The reader interested only in parallel SpRef can skip these sections and go directly to Section~\ref{sec:spref}, where we describe our SpRef algorithm, 
its novel parallelization and its analysis. 
Section~\ref{sec:expparallel} gives an extensive performance evaluation of these two primitives using 
large scale parallel experiments, including a performance comparison with similar primitives from the Trilinos package. 
Various implementation decisions and their effects on performance are also detailed.

\section{Notation}
Let $\mA \in \mathbb{S}^{\dimM \times \dimN}$ be a sparse rectangular matrix of elements from a semiring $\mathbb{S}$. 
We use $\dnnz(\mA)$ to denote the number of nonzero elements in $\mA$. When the
matrix is clear from context, we drop the parenthesis and simply use $\dnnz$. For sparse matrix indexing, we use the convenient 
${\textrm \matlab }$ colon notation, where $\mA(:,i)$ denotes the $i$\dth\ column, $\mA(i,:)$ denotes  the $i$\dth\ row, and $\mA(i,j)$ denotes the 
element at the $(i,j)$\dth\ position of matrix $\mA$. Array and vector indices are 1-based throughout this paper. 
The length of an array $\mathsf{I}$,  denoted by $\dlen(\mathsf{I})$, is equal to its number of elements.
For one-dimensional arrays, $\mathsf{I}(i)$ denotes the $i$\dth\ component of the array.
We use $\flops(\mA\, \cdot\, \mB)$, pronounced ``flops'', 
to denote the number of nonzero arithmetic operations required when computing the product of matrices $\mA$ and $\mB$. 
Since the flops required to form the matrix triple product differ depending on the order of multiplication, $\flops((\mA \mB) \cdot \mC)$ and 
$\flops(\mA \cdot (\mB \mC))$ mean different things. The former is the flops needed to multiply the product $\mA \mB$ with $\mC$, where 
the latter is the flops needed to multiply $\mA$ with the product $\mB \mC$. 
When the operation and the operands are clear from context, we simply use $\flops$. 
The \matlab\ {\tt sparse(i,j,v,m,n)} function, which is used in some of the pseudocode, creates an $m\times n$ sparse matrix $\mA$ with nonzeros $\mA(i(k), j(k)) = v(k)$.

In our analyses of parallel running
time, the latency of sending a message over the communication interconnect is $\alpha $, 
and the inverse bandwidth is $\beta $, both expressed in terms of time for a floating-point 
operation (also accounting for the cost of cache misses and memory indirections associated with that floating point operation).
$f(x) = \Theta(g(x))$ means that $f$ is bounded asymptotically by $g$ both above and below.

\section{Sparse matrix-matrix multiplication}
\label{sec:spgemm}

\begin{figure}[t]
\centering  
\includegraphics[scale=0.45]{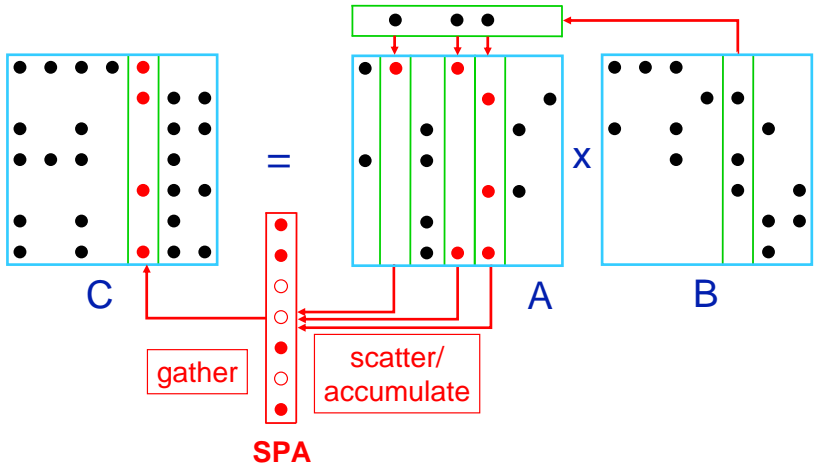}  
\caption[Multiply sparse matrices column-by-column]
{Multiplication of sparse matrices stored by columns~\cite{GALLA/sparse}. Columns of $\mA$ are accumulated as specified by the non-zero entries in a column of $\mB$ using a sparse accumulator or SPA~\cite{smatlab}.  
The contents of the SPA are stored into a column of $\mC$ once all required columns are accumulated. } 
\label{fig:cscmultpic}
\end{figure}

SpGEMM is a building block for many high-performance graph algorithms, including graph 
contraction~\cite{unifiedstarp}, breadth-first search from multiple source vertices~\cite{combblas}, 
peer pressure clustering~\cite{shahthesis}, recursive all-pairs shortest-paths~\cite{rkleene}, 
matching~\cite{matching}, and cycle detection~\cite{cycle}. It is a subroutine in more traditional 
scientific computing applications such as multigrid interpolation and restriction~\cite{multigrid} 
and Schur complement methods in hybrid linear solvers~\cite{YamazakiSchurComp10}. 
It also has applications in general computing, including parsing 
context-free languages~\cite{sparseclosure} and colored intersection searching~\cite{Kaplan:2006}.

The classical serial SpGEMM algorithm for general sparse matrices was first described by Gustavson~\cite{gust:78}, 
and was subsequently used in Matlab~\cite{smatlab} and CSparse~\cite{davisbook}. 
That algorithm, shown in Figure~\ref{fig:cscmultpic},
runs in $O(\flops + \dnnz+\dimN)$ time, which is optimal for $\flops \geq \dmax\{\dnnz,\dimN\}$. It uses the popular 
compressed sparse column (CSC) format for representing its sparse matrices. Algorithm~\ref{alg:cscgemm} gives the pseudocode for this column-wise serial algorithm for SpGEMM. 

\begin{algorithm}
\begin{algorithmic}[1]
\Procedure{Columnwise-SpGEMM}{$\mA, \mB, \mC$}
\For{$j \gets 1$ to $\dimN$}
	\For{$k$ where $\mB(k,j) \neq 0 $}	
	\State $ \mC(:, j) \gets \mC(:, j) + \mA(:, k) \cdot \mB(k, j) $
	\EndFor
\EndFor
\EndProcedure
\end{algorithmic}
\caption{Column-wise formulation of serial matrix multiplication} \label{alg:cscgemm}
\end{algorithm}

\subsection{Distributed memory SpGEMM}
\label{sec:parspgemm}
The first question for distributed memory algorithms is ``where is the data?''. 
In parallel SpGEMM, we consider two ways of distributing data to processors. 
In 1D algorithms, each processor stores a block of $\dimM/p$ rows of an $\dimM$-by-$\dimN$ sparse matrix. 
In 2D algorithms, processors are logically organized as a rectangular $p = p_r \times p_c$ grid, 
so that a typical processor is named $P(i,j)$. 
Submatrices are assigned to processors according to a 2D block decomposition:
processor $P(i,j)$ stores the submatrix $\mA_{ij}$ of dimensions $(m/p_r)\times (n/p_c)$ in its local memory. 
We extend the colon notation to slices of submatrices:  $\mA_{i:}$ denotes the  $(m/p_r)\times n$ slice of $\mA$
collectively owned by all the processors along the $i$th processor row and $\mA_{:j}$ denotes the  $m\times (n/p_c)$ slice of $\mA$
collectively owned by all the processors along the $j$th processor column.

We have previously shown that known 1D SpGEMM algorithms are not scalable to thousands of processors~\cite{icpp08}, 
while 2D algorithms can potentially speed up indefinitely, albeit with decreasing efficiency. 
There are two reasons that the 1D algorithms do not scale:
First, their auxiliary data structures cannot be loaded and unloaded fast enough to amortize their costs. 
This loading and unloading is necessary because the 1D algorithms proceed in stages in which only 
one processor broadcasts its submatrix to the others, in order to avoid running out of memory.    
Second, and more fundamentally, the communication costs of 1D algorithms are not scalable regardless of data structures.
Each processor receives $\dnnz$ (of either $\mA$ or $\mB$) data in the worst case, 
which implies that communication cost is on the same order as computation,
prohibiting speedup beyond a fixed number of processors.
This leaves us with 2D algorithms for a scalable solution.

Our previous work~\cite{ipdps08} shows that the standard compressed column or row 
(CSC or CSR) data structures
are too wasteful for storing the local submatrices arising from a 2D decomposition.
This is because the local submatrices are {\em hypersparse}, 
meaning that the ratio of nonzeros to dimension is asymptotically zero.
The total memory across all processors for CSC format would be $O(\dimN\sqrt{p}+\dnnz)$, 
as opposed to $O(\dimN+\dnnz)$ memory to store the whole matrix in CSC on a single processor. 
Thus a scalable parallel 2D data structure must respect hypersparsity.

Similarly, any algorithm whose complexity depends on matrix dimension, such as Gustavson's serial SpGEMM algorithm, 
is asymptotically too wasteful to be used as a computational kernel for multiplying the hypersparse submatrices.
Our \proc{HyperSparseGEMM}~\cite{GALLA/spgemm, ipdps08}, on the other hand, operates on the strictly $O(\dnnz)$ 
{\em doubly compressed sparse column (DCSC)} data structure, 
and its time complexity does not depend on the matrix dimension. 
Section~\ref{sec:dcsc} gives a succinct summary of DCSC.
 
Our HyperSparseGEMM uses an outer-product formulation whose time complexity is
$O(\dnzc(\mA)+\dnzr(\mB)+ \flops \cdot \lg{\dni})$, 
where $\dnzc(\mA)$ is the number of columns of $\mA$ that are not entirely zero, 
$\dnzr(\mB)$ is the number of rows of $\mB$ that are not entirely zero,
and $\dni$ is the number of indices $i$ for which $\mA(:,i)\not= \emptyset$ and $\mB(i,:)\not= \emptyset$.
The extra $\lg{\dni}$ factor at the time complexity originates from the priority queue that is used to merge $\dni$ outer products 
on the fly. The overall memory requirement of this algorithm is the asymptotically optimal
$O(\dnnz(\mA) + \dnnz(\mB) + \dnnz(\mC))$, independent of either matrix dimensions or $\flops$.

\subsection{DCSC Data Structure}
\label{sec:dcsc}

\begin{figure}
\begin{center}
${\scriptstyle
	\left|
	\begin {array}{rrrrrrrrrrrrrrrr}  
	\CP	& = 	& 1 		& 3	& 3	& 3 	&  3	& 3	&  3		& 4 		&  5	& 5			\\\noalign{\smallskip} 
	    	&   	& \downarrow	& 	& 	& 	& 	& 	&  \downarrow	& \downarrow 	&  	& 		 	\\\noalign{\smallskip} 
	\IR 	& = 	& 6		& 8	& 	&  	& 	& 	&  4		& 2 		&  	& 			\\\noalign{\smallskip} 
	\NUM    & = 	& 0.1		& 0.2	& 	& 	&	&  	&  0.3		&  0.4		& 	&  			\\\noalign{\smallskip} 
	\end {array}  \right| 
}$
\end{center}
\caption{Matrix $\mA$ in CSC format}
\label{fig:csc}
\end{figure}

DCSC~\cite{ipdps08} is a further compressed version of CSC where repetitions in the column pointers array, which arise from empty columns, are not allowed. 
Only columns that have at least one nonzero are represented, together with their column indices.  

\begin{figure}[hb]
  \begin{center}
   \vspace*{1ex}
  \begin{minipage}[b]{0.5\linewidth}
       \begin{center}
	\raisebox{-0.6cm}{\includegraphics{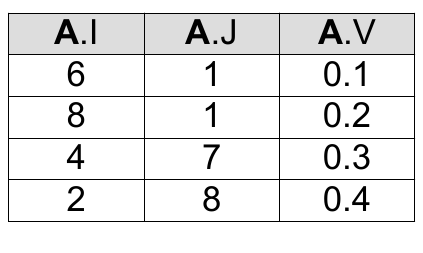}}
      \newline
     \caption{Matrix $\mA$ in Triples format \label{fig:triples}}
       \end{center}
    \end{minipage}\hfill
    \begin{minipage}[b]{0.5\linewidth}
       \begin{center}
      \includegraphics[scale=0.9]{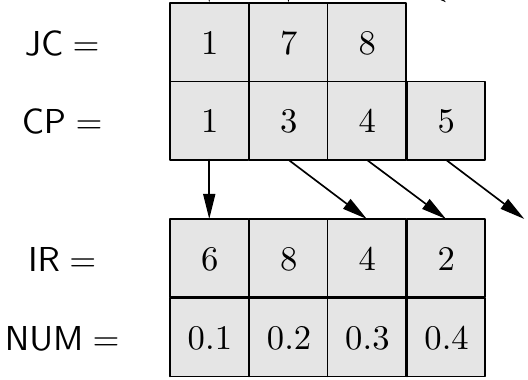}
     \caption{Matrix $\mA$  in DCSC format  \label{fig:dcsc}}
       \end{center}
    \end{minipage}
  \end{center}
\end{figure}

For example, consider the 9-by-9 matrix with 4 nonzeros as in Figure~\ref{fig:triples}.
Figure~\ref{fig:csc} showns its CSC storage, which includes repetitions and redundancies in the column pointers array ($\CP$). 
Our new data structure compresses this column pointers array to avoid repetitions, giving $\CP$ of DCSC as 
in Figure~\ref{fig:dcsc}. DCSC is essentially a sparse array of sparse columns, whereas CSC is a dense array of sparse columns.

After removing repetitions, $\CP(i)$ does no longer refer to the $i$\dth\ column. 
A new $\JC$ array, which is parallel to $\CP$, gives us the column numbers. 
Although our \proc{Hypersparse\_GEMM} algorithm does not need column indexing, DCSC can support fast column indexing by building 
an $\AUX$ array that contains pointers to nonzero columns (columns that have at least one nonzero element) in linear time.

\subsection{Sparse SUMMA algorithm}
\label{sec:sparsesumma}
Our parallel algorithm is inspired by the dense matrix-matrix multiplication algorithm SUMMA~\cite{summa}, 
used in parallel BLAS~\cite{parallelblas}.
SUMMA is memory efficient and easy to generalize to non-square matrices and processor grids.
 
The pseudocode of our 2D algorithm, \proc{SparseSUMMA}~\cite{icpp08}, is shown in Algorithm~\ref{alg:sparsesumma} in its most general form. The coarseness 
of the algorithm can be adjusted by changing the block size $1 \leq b \leq \gcd(\dimK/p_r, \dimK/p_c)$. For the first time, we present the algorithm in a 
form general enough to handle rectangular processor grids and a wide range of blocking parameter choices. The pseudocode, however, requires $b$ to evenly divide 
$\dimK/p_r$ and $\dimK/p_c$ for ease of presentation. This requirement can be dropped at the expense of having potentially multiple broadcasters 
along a given processor row and column during one iteration of the loop starting at \liref{stage}. The $\textbf{for}\ldots\textbf{in parallel do}$ construct indicates
that all of the \textbf{do} code blocks execute in parallel by all the processors. The execution of the algorithm on a rectangular grid with rectangular
sparse matrices is illustrated in Figure~\ref{fig:spsumma}. We refer to the Combinatorial BLAS source code~\cite{combblas_website} for additional details.
 
\begin{algorithm}
\begin{algorithmic}[1]
\Require $\mA \in \mathbb{S}^{\dimM \times \dimK},\mB \in \mathbb{S}^{\dimK \times \dimN}$: sparse matrices distributed on a $p_r \times p_c$ processor grid
\Ensure $\mC \in \mathbb{S}^{\dimM \times \dimN}$: the product $\mA \mB$, similarly distributed. 
\Procedure{SparseSUMMA}{$\mA, \mB, \mC$}
\For{all processors $P(i,j)$\  \InParallel} 
\State $ \mB_{ij} \gets (\mB_{ij})\transpose$ \lilabel{transpose}
	\For{ $q =1$ to $\dimK/b$} \Comment{blocking parameter $b$ evenly divides $\dimK/p_r$ and $\dimK/p_c$} \lilabel{stage}
	\State $c = (q \cdot b)/p_c$ \Comment{$c$ is the broadcasting processor column}
	\State $r = (q \cdot b)/p_r$ \Comment{$r$ is the broadcasting processor row}
 	\State $\id{lcols} = (q \cdot b) \bmod{p_c}:((q+1) \cdot b) \bmod p_c$ \Comment{local column range} \lilabel{loccol}
	\State $\id{lrows} = (q \cdot b) \bmod{p_r}:((q+1) \cdot b) \bmod p_r$ \Comment{local row range} \lilabel{locrow}
	\State	$\mA^{rem} \gets $ \Call{Broadcast}{$\mA_{ic}(:,\id{lcols}), P(i,:)$} 
	\State	$\mB^{rem} \gets $ \Call{Broadcast}{$\mB_{rj}(:,\id{lrows}), P(:,j)$} 		
	\State	$\mC_{ij} \gets \mC_{ij} + \Call{HyperSparseGEMM}{\mA^{rem},\mB^{rem}}$	\lilabel{localmult}
	\EndFor
\State $ \mB_{ij} \gets (\mB_{ij})\transpose$  \Comment{Restore the original $\mB$}
\EndFor
\EndProcedure
\end{algorithmic}
\caption{Operation $ \mC \gets \mA  \mB$ using Sparse SUMMA} \label{alg:sparsesumma}
\end{algorithm}

\begin{figure}
\begin{center}
\includegraphics[scale=0.6]{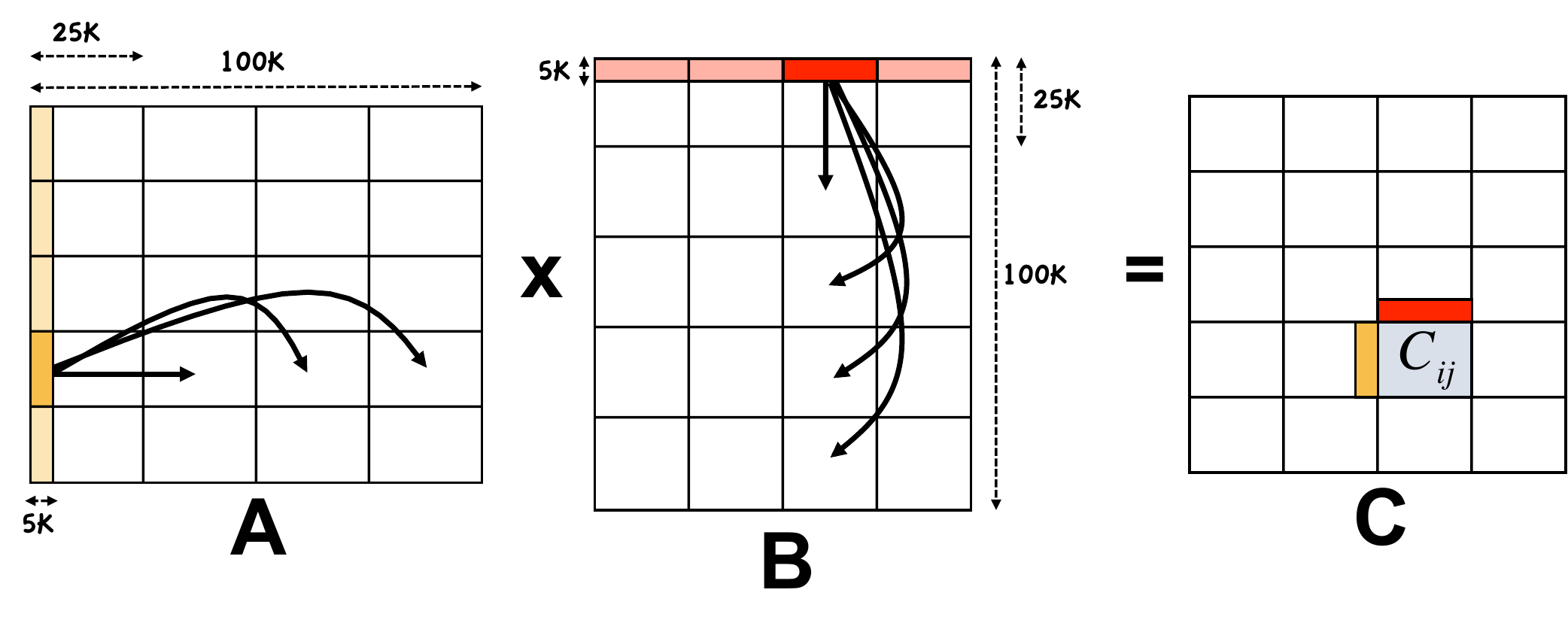}
\end{center}
\caption{ Execution of the Sparse SUMMA algorithm for sparse matrix-matrix multiplication $\mC = \mA \cdot \mB$. The example shows the first stage of the algorithm execution 
(the broadcast and the local update by processor $P(i,j)$). The two rectangular sparse operands $\mA$ and $\mB$ are of sizes $\dimM$-by-$100K$ and 
$100K$-by-$\dimN$, distributed on a $5 \times 4$ processor grid. Block size $b$ is chosen to be $5K$. 
\label{fig:spsumma}}
\end{figure} 

The \proc{Broadcast}($\mA_{ic}, P(i,:)$) syntax means that the owner of $\mA_{ic}$ becomes the root and broadcasts
its submatrix to all the processors on the $i$th processor row. Similarly for \proc{Broadcast}($\mB_{rj}, P(:,j)$),
the owner of $\mB_{rj}$ broadcasts its submatrix to all the processors on the $j$th processor column. 
In \lirefs{loccol}{locrow}, we find the local column (for $\mA$) and row (for $\mB$) ranges for matrices that are 
to be broadcast during that iteration. They are significant only at the broadcasting processors, which can be determined 
implicitly from the first parameter of \proc{Broadcast}. We index $\mB$ by columns as opposed to rows because it
has already been locally transposed in \liref{transpose}. This makes indexing faster since local submatrices are stored in 
the column-based DCSC sparse data structure.  
Using DCSC, the expected cost of fetching 
$b$ consecutive columns of a matrix $\mA$ is $b$ plus the size (number of nonzeros) of the output.
Therefore, the algorithm asymptotically has the same computation cost for all values of $b$.

For our complexity analysis, we assume that nonzeros of input sparse matrices are independently and identically distributed, input matrices
are $\dimN$-by-$\dimN$, with $d > 0$ nonzeros per row and column on the average.  The sparsity parameter $d$ simplifies our analysis by
making different terms in the complexity comparable to each other. For example, if $\mA$ and $\mB$ both have 
sparsity $d$, then $\dnnz(\mA) = d \dimN$ and $\flops(\mA \mB) = d^2 \dimN$.

The communication cost of the Sparse SUMMA algorithm, for the case of $p_r=p_c=\sqrt{p}$, is
\begin{equation} 
T_{comm} 	= \sqrt{p} \: \Bigl( 2\, \alpha + \beta\, \bigl( \frac{\dnnz(\mA) +\dnnz(\mB)}{p} \bigr) \Bigr) 
		= \Theta (\alpha\, \sqrt{p} + \frac{\beta\, d\, \dimN}{\sqrt{p}}),
\label{eqn:tcomm2d}
\end{equation}
and its computation cost is
\begin{equation}
T_{comp} = O \Bigl( \frac{d\, n}{\sqrt{p}}  +  \frac{d^2  n}{p}  \lg \bigl(  \frac{d^2 n}{p \sqrt{p}}  \bigr) + \frac{d^2 n \lg{\sqrt{p}}}{p} \Bigr) = O  \Bigl( \frac{d\, n}{\sqrt{p}}  +  \frac{d^2  n}{p}   \lg \bigl(  \frac{d^2 n}{p}  \bigr)  \Bigr)~\cite{GALLA/spgemm}.
\label{eqn:tcomp2d}	
\end{equation}

We see that although scalability is not perfect and efficiency deteriorates as $p$ increases,  the achievable speedup is not bounded.
Since $\lg(d^2 n/ p)$ becomes negligible as $p$ increases, the bottlenecks for scalability are the $\beta \, d \, n/\sqrt{p}$ term of $T_{comm}$ and 
the $d\, n/\sqrt{p}$ term of $T_{comp}$, which scale with $\sqrt{p}$. Consequently, two different scaling regimes are likely to be present: 
A close to linear scaling regime until those terms start to dominate and a $\sqrt{p}$-scaling regime afterwards.

\section{Sparse matrix indexing and subgraph selection}
\label{sec:spref}
Given a sparse matrix $\mA$ and two vectors $\mathsf{I}$ and $\mathsf{J}$ of indices,
SpRef extracts a submatrix and stores it as another sparse matrix, $\mB = \mA(\mathsf{I},\mathsf{J})$. 
Matrix $\mB$ contains the elements in rows $\mathsf{I}(i)$
and columns $\mathsf{J}(j)$ of $\mA$,
for $i=1,...,\dlen(\mathsf{I})$ and $j=1,...,\dlen(\mathsf{J})$, respecting the order of indices. 
If $\mA$ is the adjacency matrix of a graph, SpRef($\mA,\mathsf{I}$, $\mathsf{I})$ selects an induced subgraph. 
SpRef can also be used to randomly permute the rows and columns of a sparse matrix,
a primitive in parallel matrix computations commonly used for load balancing~\cite{ogielski}. 
 
Simple cases such as row ($\mA(i,:)$), column ($\mA(:,i)$), and element ($\mA(i,j)$) indexing are often handled 
by special purpose subroutines~\cite{GALLA/sparse}. A parallel algorithm for the general case, where $\mathsf{I}$ and 
$\mathsf{J}$ are arbitrary vectors of indices, does not exist in the literature. We propose an algorithm that uses parallel 
SpGEMM. Our algorithm is amenable to performance analysis for the general case.

A related kernel is SpAsgn, or sparse matrix assignment. 
This operation assigns a sparse matrix to a submatrix of another sparse matrix,
$\mA(\mathsf{I},\mathsf{J}) = \mB$.
A variation of SpAsgn is $\mA(\mathsf{I},\mathsf{J}) = \mA(\mathsf{I},\mathsf{J}) + \mB$, 
which is similar to Liu's {\it extend-add} operation~\cite{multifrontal}
in finite element matrix assembly. 
Here we describe the sequential SpAsgn algorithm and its analysis, 
and report large-scale performance results in Section~\ref{sec:exp:spasgn}.

\begin{figure}[h]
\begin{center}
\includegraphics[scale=0.55]{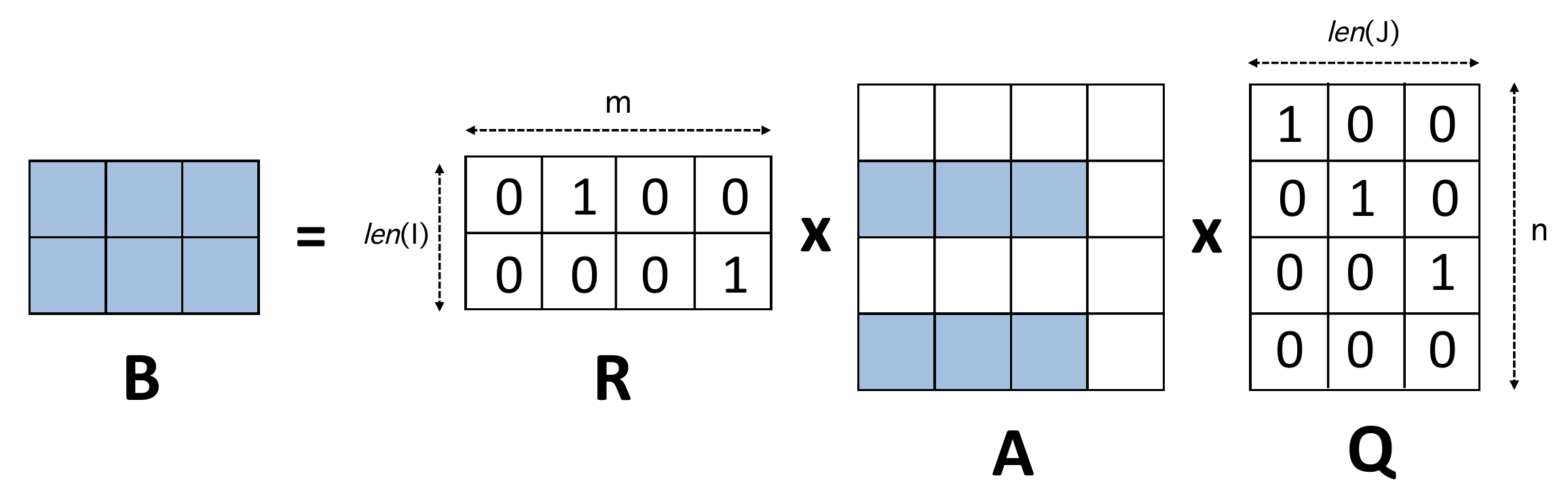}
\end{center}
\caption{ Sparse matrix indexing (SpRef) using mixed-mode SpGEMM. On an $m$-by-$n$ matrix $\mA$, 
the SpRef operation $\mA(\mathsf{I},\mathsf{J})$ extracts a $\dlen(\mathsf{I})$-by-$\dlen(\mathsf{J})$ submatrix, 
where $\mathsf{I}$ is a vector of row indices and $\mathsf{J}$ is a vector of column indices. 
The example shows $\mB = \mA([2,4],[1,2,3])$. 
It performs two SpGEMM operations between a boolean matrix and a general-type matrix. 
\label{fig:spref}}
\end{figure} 

\subsection{Sequential algorithms for SpRef and SpAsgn}
Performing SpRef by a triple sparse-matrix product is illustrated in Figure~\ref{fig:spref}. The algorithm can be described concisely in 
Matlab notation as follows:

\begin{tabular}{c}
\begin{lstlisting}
function B = spref(A,I,J)

[m,n] = size(A);
R = sparse(1:len(I),I,1,len(I),m);
Q = sparse(J,1:len(J),1,n,len(J));
B = R*A*Q;
\end{lstlisting}
\end{tabular}

The sequential complexity of this algorithm is $\flops(\mathbf{R} \cdot \mA) + \flops((\mathbf{R} \mA) \cdot \mathbf{Q})$. 
Due to the special structure of the permutation matrices, 
the number of nonzero operations required to form the product $\mathbf{R} \cdot \mA$ is equal to 
the number of nonzero elements in the product. 
That is, $\flops(\mathbf{R}\cdot \mA) = \dnnz(\mathbf{R}\mA) \leq \dnnz(\mA)$. 
Similarly, $\flops((\mathbf{R} \mA) \cdot \mathbf{Q}) \leq \id{nnz}(\mA)$, 
making the overall complexity $O(\dnnz(\mA))$ 
for any $\mathsf{I}$ and $\mathsf{J}$. 
This is optimal in general, 
since just writing down the result of a matrix permutation
$\mB = \mA(r,r)$ requires $\Omega( \dnnz(\mA))$ operations. 

Performing SpAsgn by two triple sparse-matrix products and additions is illustrated in Figure~\ref{fig:spasgn}.
We create two temporary sparse matrices of the same dimensions as $\mA$. These matrices contain nonzeros 
only for the $\mA(\mathsf{I},\mathsf{J})$ part, and zeros elsewhere. The first triple product embeds $\mB$ into a 
bigger sparse matrix that we add to $\mA$. The second triple product embeds $\mA(\mathsf{I},\mathsf{J})$ into 
an identically sized sparse matrix so that we can zero out the $\mA(\mathsf{I},\mathsf{J})$ portion by subtracting 
it from $\mA$. Since general semiring axioms do not require additive inverses to exist, we implement this piece of the algorithm
slightly differently that stated in the pseudocode. We still form the $\mathbf{S}\mA\mathbf{T}$ product but instead of using subtraction, 
we use the generalized sparse elementwise multiplication function of the Combinatorial BLAS~\cite{combblas}
to zero out the $\mA(\mathsf{I},\mathsf{J})$ portion. In particular, we first perform an 
elementwise multiplication of $\mA$ with the negation of $\mathbf{S}\mA\mathbf{T}$ without explicitly forming the negated matrix, 
which can be dense. Thanks to this direct support for the implicit negation operation, the complexity bounds are identical to the version
that uses subtraction.  The negation does not assume additive inverses: it sets all zero entries to one and all nonzeros entries to zero.
The algorithm can be described concisely in Matlab notation as follows:
  
\begin{tabular}{c}
\begin{lstlisting}
function C = spasgn(A,I,J,B)
% A = spasgn(A,I,J,B) performs A(I,J) = B

[ma,na] = size(A);
[mb,nb] = size(B);
R = sparse(I,1:mb,1,ma,mb);
Q = sparse(1:nb,J,1,nb,na);
S = sparse(I,I,1,ma,ma);
T = sparse(J,J,1,na,na);
C = A  + R*B*Q - S*A*T;
\end{lstlisting}
\end{tabular}

Liu's {\it extend-add} operation is similar to SpAsgn but simpler; it just omits subtracting the $\mathbf{S} \mA \mathbf{T}$ term. 

\begin{figure}
\[
\mA  = \mA + \left(\begin{array}{ccc}
     0 		& 0	  & 0	 		\\
     0		& \mB     &  0 		 	\\
     0 		& 0	  & 0	 \end{array}\right) 
- \left(\begin{array}{ccc}
     0 		& 0	  & 0	 		\\
     0		& \mA(\mathsf{I},\mathsf{J})  & 0 		 	\\
     0 		& 0	  & 0	 		
\end{array}\right) 
\]
\caption{ Illustration of SpAsgn ($\mA(\mathsf{I},\mathsf{J}) = \mB$) for rings where additive inverses are defined. 
For simplicity, the vector indices $\mathsf{I}$ and
$\mathsf{J}$ are shown as contiguous, but they need not be.
\label{fig:spasgn}}
\end{figure}

Let us analyze the complexity of SpAsgn. Given $\mA \in \mathbb{S}^{\dimM \times \dimN}$ and 
$\mB \in \mathbb{S}^{\dlen(\mathsf{I}) \times \dlen(\mathsf{J})}$, the intermediate boolean matrices have the following properties:

$\mathbf{R}$ is $\dimM$-by-$\dlen(\mathsf{I})$ rectangular with $\dlen(\mathsf{I})$ nonzeros, one in each column.

$\mathbf{Q}$ is $\dlen(\mathsf{J})$-by-$\dimN$ rectangular with $\dlen(\mathsf{J})$ nonzeros, one in each row.  

$\mathbf{S}$ is $\dimM$-by-$\dimM$ symmetric with $\dlen(\mathsf{I})$ nonzeros, all located along the diagonal.  

$\mathbf{T}$ is $\dimN$-by-$\dimN$ symmetric with $\dlen(\mathsf{J})$ nonzeros, all located along the diagonal.    

\begin{theorem}
The sequential SpAsgn algorithm takes $O(\dnnz(\mA)+\dnnz(\mB)+\dlen(\mathsf{I})+\dlen(\mathsf{J}))$ time using an optimal 
 $\Theta(\flops)$ SpGEMM subroutine. 
\end{theorem}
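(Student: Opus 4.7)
The plan is to bound the cost of each line of the \proc{spasgn} pseudocode and sum the contributions.

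First, I would verify that the four boolean matrices $\mathbf{R}, \mathbf{Q}, \mathbf{S}, \mathbf{T}$ can each be assembled in $O(\dlen(\mathsf{I}) + \dlen(\mathsf{J}))$ time. Each has at most $\dlen(\mathsf{I})$ or $\dlen(\mathsf{J})$ nonzeros whose row/column coordinates are read directly off $\mathsf{I}$ or $\mathsf{J}$, and a DCSC-style representation (Section~\ref{sec:dcsc}) avoids any dependence on the ambient dimensions $\dimM$ or $\dimN$.

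Second, I would exploit the special structure of the four boolean matrices to bound $\flops$ of each triple product. Since $\mathbf{R}$ has exactly one nonzero per column, left-multiplication by $\mathbf{R}$ scatters each nonzero of $\mB$ into a single output row, so $\flops(\mathbf{R}\cdot \mB) = \dnnz(\mathbf{R}\mB) \le \dnnz(\mB)$. Right-multiplication by $\mathbf{Q}$, which has one nonzero per row, similarly yields $\flops((\mathbf{R}\mB)\cdot\mathbf{Q}) \le \dnnz(\mB)$. For the second triple product, $\mathbf{S}$ is diagonal with support $\mathsf{I}$, so $\mathbf{S}\mA$ amounts to retaining only those rows of $\mA$ whose row index lies in $\mathsf{I}$; hence $\flops(\mathbf{S}\cdot \mA) \le \dnnz(\mA)$ and likewise $\flops((\mathbf{S}\mA)\cdot\mathbf{T}) \le \dnnz(\mathbf{S}\mA) \le \dnnz(\mA)$. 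Invoking the $\Theta(\flops)$ SpGEMM subroutine four times therefore costs $O(\dnnz(\mA) + \dnnz(\mB))$.

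Third, I would account for the final combination $\mA + \mathbf{R}\mB\mathbf{Q} - \mathbf{S}\mA\mathbf{T}$ (or the variant using implicit elementwise negation described in the text). Since the two correction matrices have at most $\dnnz(\mB)$ and $\dnnz(\mA)$ nonzeros respectively, a sparse add/merge against $\mA$ costs $O(\dnnz(\mA)+\dnnz(\mB))$. Summing setup, four SpGEMM calls, and the final merge produces the claimed bound.

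The main obstacle, I expect, is verifying that the $\Theta(\flops)$ assumption truly absorbs the setup cost of every multiplication: one must check that reading the sparse operands and writing the sparse output do not introduce hidden additive terms in $\dimM$ or $\dimN$ that would break the dimension-free bound. This is precisely where the DCSC representation and the dimension-independent \proc{HyperSparseGEMM} kernel of Section~\ref{sec:parspgemm} carry the weight, rather than a classical Gustavson-style kernel whose $O(\flops + \dnnz + \dimN)$ cost would reintroduce an unwanted dimension term.
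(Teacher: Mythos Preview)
Your proposal is correct and follows the same line-by-line accounting as the paper: bound the construction of $\mathbf{R},\mathbf{Q},\mathbf{S},\mathbf{T}$, bound the $\flops$ of each of the four SpGEMM calls using the one-nonzero-per-row/column structure of the boolean matrices, then bound the final merge. The only noticeable divergence is in the second triple product: you bound $\flops(\mathbf{S}\cdot\mA)$ and $\flops((\mathbf{S}\mA)\cdot\mathbf{T})$ by $\dnnz(\mA)$, whereas the paper asserts they are $\dlen(\mathsf{I})$ and $\dlen(\mathsf{J})$ respectively (and that $\dnnz(\mathbf{S}\mA\mathbf{T})=O(\dlen(\mathsf{I})+\dlen(\mathsf{J}))$). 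Your bound is the safe one---multiplying by a diagonal $\mathbf{S}$ with support $\mathsf{I}$ touches every nonzero in the selected rows of $\mA$, which is $\dnnz(\mA(\mathsf{I},:))\le\dnnz(\mA)$, not $\dlen(\mathsf{I})$ in general---so your intermediate estimate is actually more defensible than the paper's, and both collapse to the same final $O(\dnnz(\mA)+\dnnz(\mB)+\dlen(\mathsf{I})+\dlen(\mathsf{J}))$. Your closing remark about needing a dimension-independent kernel (DCSC/\proc{HyperSparseGEMM}) to make the ``$\Theta(\flops)$ SpGEMM'' hypothesis hold is a point the paper leaves implicit.
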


\begin{proof}
The product $\mathbf{R} \cdot \mB$ requires $\flops(\mathbf{R} \cdot \mB) = \dnnz( \mathbf{R} \mB) = \dnnz(\mB)$ operations because
there is a one-to-one relationship between nonzeros in the output and $\flops$ performed. Similarly, 
$\flops((\mathbf{R} \mB) \cdot \mathbf{Q}) = \dnnz( \mathbf{R} \mB \mathbf{Q}) =  \dnnz(\mB)$, yielding $\Theta(\dnnz(\mB))$
complexity for the first triple product. The product $\mathbf{S} \cdot \mA$ only requires $\dlen(\mathsf{I})$ $\flops$ since it does not need to 
touch nonzeros of $\mA$ that do not contribute to $\mA(\mathsf{I},:)$. Similarly,
$(\mathbf{S} \mA) \cdot \mathbf{T}$ requires only $\dlen(\mathsf{J})$ $\flops$. The number of nonzeros in the second triple 
product is $\dnnz(\mathbf{S} \mA \mathbf{T}) = O(\dlen(\mathsf{I})+\dlen(\mathsf{J}))$. 
The final pointwise addition and subtraction (or generalized elementwise multiplication in the absence of additive inverses) operations take
time on the order of the total number of nonzeros in all operands~\cite{GALLA/sparse}, which is 
$O(\dnnz(\mA)+\dnnz(\mB)+\dlen(\mathsf{I})+\dlen(\mathsf{J}))$. 
\end{proof}

\subsection{SpRef in parallel}
\label{sec:parspref}

The parallelization of SpRef poses several challenges. 
The boolean matrices have only one nonzero per row or column.
For the parallel 2D algorithm to scale well with increasing number of processors, 
data structures and algorithms should respect hypersparsity~\cite{ipdps08}.
Communication should ideally take place along a single processor dimension, 
to save a factor of $\sqrt{p}$ in communication volume. As before, we assume a uniform 
distribution of nonzeros to processors in our analysis.

\begin{figure}
\begin{center}
\includegraphics[scale=0.55]{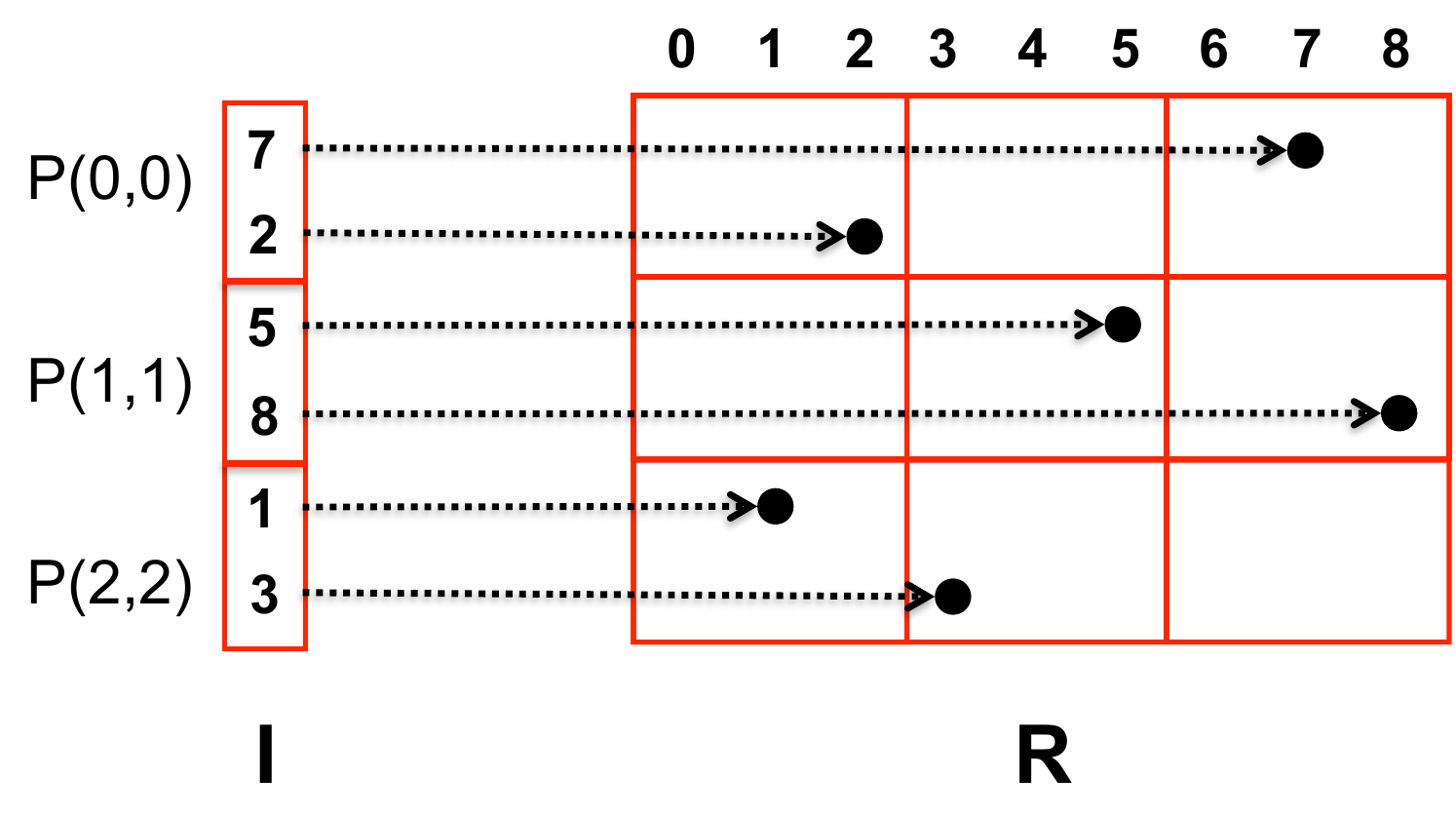}
\end{center}\caption{ Parallel forming of the left hand side boolean matrix $\mathbf{R}$ from the index vector $\mathsf{I}$ on 9 processors in 
a logical $3\times3$ grid. $\mathbf{R}$ will be subsequently multiplied 
with $\mA$ to extract 6 rows out of 9 from $\mA$ and order them as $\{7,2,5,8,1,3\}$.    
\label{fig:scatter}}
\end{figure} 

The communication cost of forming the $\mathbf{R}$ matrix in parallel is the cost of $\proc{Scatter}$ along the processor column. For the case of
vector $\mathsf{I}$ distributed to $\sqrt{p}$ diagonal processors, scattering can be implemented with an average communication cost of 
$\Theta(\alpha \cdot \lg{p} + \beta \cdot(\dlen(\mathsf{I})/\sqrt{p})$~\cite{ChanHPG07}. This process is illustrated in Figure~\ref{fig:scatter}.
The $\mathbf{Q}\transpose$ matrix can be constructed identically, followed by a $\proc{Transpose}(\mathbf{Q}\transpose)$ operation where
each processor $P(i,j)$ receives $\dnnz(\mathbf{Q})/p = \dlen(\mathsf{J})/p$ words of data from its diagonal neighbor $P(j,i)$. Note that the 
communication cost of the transposition is dominated by the cost of forming $\mathbf{Q}\transpose$ via $\proc{Scatter}$. 

While the analysis of our parallel SpRef algorithm assumes that the index vectors are distributed only on diagonal processors,
the asymptotic costs are identical in the 2D case where the vectors are distributed across all the processors~\cite{bfs:11}. This is because 
the number of elements (the amount of data) received by a given processor stays the same with the
only difference in the algorithm being the use of $\proc{Alltoall}$ operation instead of $\proc{Scatter}$ during the formation of the
$\mathbf{R}$ and $\mathbf{Q}$ matrices.

The parallel performance of SpGEMM is a complicated function of the matrix nonzero structures~\cite{icpp08,spgemm:10}. 
For SpRef, however, the special structure makes our analysis more precise. Suppose that the triple product is 
evaluated from left to right, $\mB=(\mathbf{R} \cdot \mA) \cdot \mathbf{Q}$; a similar analysis can be applied to the reverse evaluation. 
A conservative estimate of $\id{ni}(\mathbf{R},\mA)$, the number of indices $i$ for which $\mathbf{R}(:,i)\not= \emptyset$ 
and $\mA(i,:)\not= \emptyset$, is $\dnnz(\mathbf{R}) = \dlen(\mathsf{I})$. 

Using our \proc{HyperSparseGEMM}~\cite{GALLA/spgemm, ipdps08} as the computational kernel, time to compute the product 
$\mathbf{R} \mA$ (excluding the communication costs) is:

 \begin{equation*}
\label{eqn:tmult}
T_{mult} =  \underset{i,j}{\text{max}} \sum_{k=1}^{\sqrt{p}} \biggl(\!  \dnzc(\mathbf{R}_{ik})+\dnzr(\mA_{kj}) +  \flops(\mathbf{R}_{ik} \cdot \mA_{kj}) \cdot \lg{\id{ni}(\mathbf{R}_{ik},\mA_{kj})} \! \biggr),
\end{equation*}
where the maximum over all $(i,j)$ pairs is equal to the average, due to the uniform nonzero distribution assumption. 

Recall from the sequential analysis that 
 $\flops(\mathbf{R}\cdot \mA) \leq \dnnz(\mA)$ since each nonzero in $\mA$ contributes at most once to the overall flop count. We also know that 
 $\dnzc(\mathbf{R}) = \dlen(\mathsf{I})$ and $\dnzr(\mA) \leq \dnnz(\mA)$. Together with the uniformity 
 assumption, these identities yield the following results:
  \begin{align*}
 &\flops(\mathbf{R}_{ik} \cdot \mA_{kj}) = \frac{\dnnz(\mA)}{p\sqrt{p}}, \\
 &\id{ni}(\mathbf{R}_{ik} \cdot \mA_{kj}) \leq  \dnnz(\mathbf{R}_{ik}) = \frac{\dlen(\mathsf{I})}{p}, \\
 & \sum_{k=1}^{\sqrt{p}}  \dnzc(\mathbf{R}_{ik}) = \dnzc(\mathbf{R}_{i:}) =\frac{\dlen(\mathsf{I})}{\sqrt{p}},  \\
 &  \sum_{k=1}^{\sqrt{p}} \dnzr(\mA_{ik}) = \dnzr(\mA_{i:}) \leq \frac{\dnnz(\mA)}{\sqrt{p}}.
 \end{align*}

In addition to the multiplication costs, adding intermediate triples in $\sqrt{p}$ stages costs an extra 
$\flops(\mathbf{R}_{i:} \cdot \mA_{:j}) \lg{\sqrt{p}} = (\dnnz(\mA)/p)  \lg{\sqrt{p}} $ operations per processor.  
Thus, we have the following estimates of computation and 
communication costs for computing the product $\mathbf{R} \mA$: 

\begin{align*} 
T_{comp}(\mathbf{R}\cdot \mA) &=  O \Bigl( \frac{\dlen(\mathsf{I})  + \dnnz(\mA)}{\sqrt{p}} 
	+ \frac{\id{nnz}(\mA)}{p} \cdot \lg \bigl(\frac{\dlen(\mathsf{I})}{p} +\sqrt{p}\bigr) \Bigr), \\
T_{comm}(\mathbf{R}\cdot \mA) &= \Theta(\alpha \cdot \sqrt{p} + \beta \cdot \frac{\id{nnz}(\mA)}{\sqrt{p}}).
\end{align*}

Given that $\dnnz(\mathbf{R}\mA) \leq \dnnz(\mA)$, the analysis of multiplying the intermediate product $\mathbf{R} \mA$ with $\mathbf{Q}$ is similar.
Combined with the cost of forming auxiliary matrices $\mathbf{R}$ and $\mathbf{Q}$ and the costs of transposition of $\mathbf{Q}\transpose$, the total 
cost of the parallel SpRef algorithm becomes

\begin{align*} 
T_{comp} &=  O \Bigl( \frac{\dlen(\mathsf{I}) + \dlen(\mathsf{J})  + \dnnz(\mA)}{\sqrt{p}} 
	+ \frac{\id{nnz}(\mA)}{p} \cdot \lg \bigl(\frac{\dlen(\mathsf{I}) + \dlen(\mathsf{J}) }{p} +\sqrt{p}\bigr) \Bigr), \\
T_{comm} &= \Theta \Bigl ( \alpha \cdot \sqrt{p} + \beta \cdot \frac{\id{nnz}(\mA) + \dlen(\mathsf{I}) + \dlen(\mathsf{J})}{\sqrt{p}} \Bigr).
\end{align*}

We see that SpGEMM costs dominate the cost of SpRef. The asymptotic speedup is limited to $\Theta(\sqrt{p})$, as in the case of SpGEMM.

\section{Experimental Results}
\label{sec:expparallel}
We ran experiments on NERSC's Franklin system~\cite{franklin}, a 9660-node Cray XT4.
 Each XT4 node contains a quad-core 2.3~GHz AMD Opteron processor, attached
 to the XT4 interconnect via a Cray SeaStar2 ASIC using a  HyperTransport 2 
 interface capable of 6.4~GB/s. The SeaStar2 routing ASICs are connected in a 3D torus topology, 
 and each link is capable of 7.6~GB/s peak bidirectional bandwidth. Our algorithms perform similarly well on a fat tree topology, 
 as evidenced by our experimental results on the Ranger platform that are included in an earlier technical report~\cite{spgemm:10}.

We used the GNU C/\Cpp compilers (version 4.5), and Cray's MPI implementation, which is based on MPICH2.
We incorporated our code into the Combinatorial BLAS framework~\cite{combblas}. 
We experimented with core counts that are perfect squares, because the Combinatorial BLAS currently uses 
a square $\sqrt{p} \times \sqrt{p}$ processor grid. 
We compared performance with the Trilinos package (version 10.6.2.0)~\cite{Her2005},
which uses a 1D decomposition for its sparse matrices.  

In the majority of our experiments, we used synthetically generated \rmat\ matrices
rather than \erdosrenyi\ \cite{erdos} ``flat'' random matrices,
as these are more realistic for many graph analysis applications.  \rmat~\cite{rmat}, the Recursive MATrix generator,  
generates graphs with skewed degree distributions that approximate a power-law.  A scale $n$ \rmat\ matrix is $2^n$-by-$2^n$.
Our \rmat\ matrices have an average of $8$ nonzeros per row and column.
\rmat\ seed paratemeters are $a=.6$, and $b=c=d=.4/3$.
We applied a random symmetric permutation to the input matrices
to balance the memory and the computational load.
In other words, instead of storing and computing $\mC = \mA \mB$, 
we compute $\mathbf{P} \mC \mathbf{P}\transpose = (\mathbf{P} \mA \mathbf{P}\transpose )(\mathbf{P} \mB \mathbf{P}\transpose)$.
All of our experiments are performed on double-precision floating-point inputs.

Since algebraic multigrid on graphs coming from physical problems is an important case, 
we included two more matrices from the Florida Sparse Matrix collection~\cite{DavisH11} to our experimental analysis, 
into Section~\ref{sec:exp:c}, where we benchmark restriction operation that is used in algebraic multigrid. The first such matrix is 
a large circuit problem (Freescale1) with $17$ million nonzeros and $3.42$ million rows and columns. The second
matrix comes from a structural problem (GHS\_psdef/ldoor), and has $42.5$ million nonzeros and $952,203$ rows and columns. 


\subsection{Parallel Scaling of SpRef}
\label{sec:exp:spref}

\begin{figure*}
\centering
\begin{minipage}[t]{.47\textwidth}\centering
\includegraphics[scale=0.26]{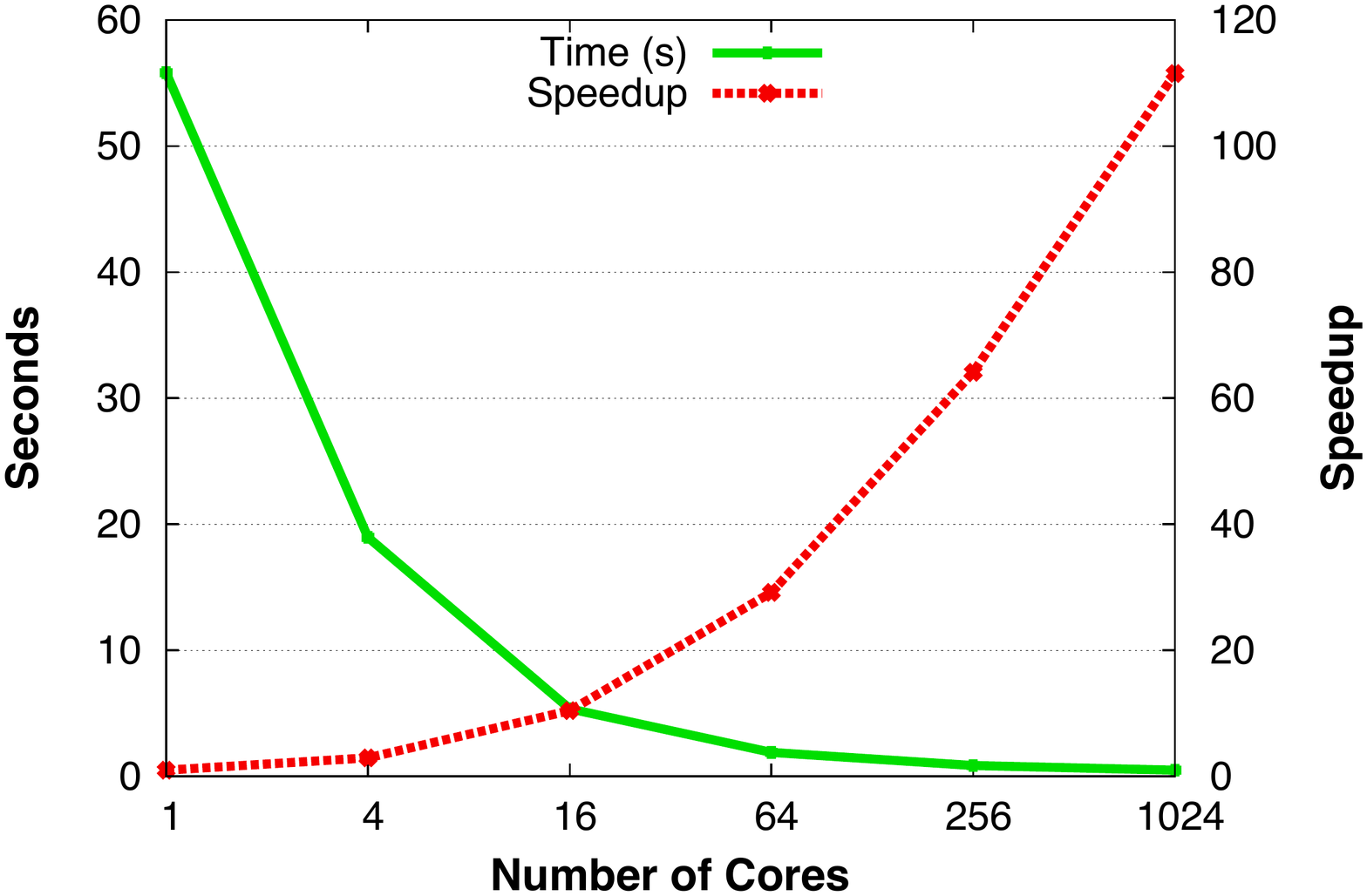}
\caption{Performance and parallel scaling of applying a random symmetric permutation to an \rmat\ matrix of scale 22. The x-axis uses a log scale.
 \label{fig:randpermperf}}
\end{minipage}
\hspace{0.2cm}
\begin{minipage}[t]{.47\textwidth}\centering
\includegraphics[scale=0.26]{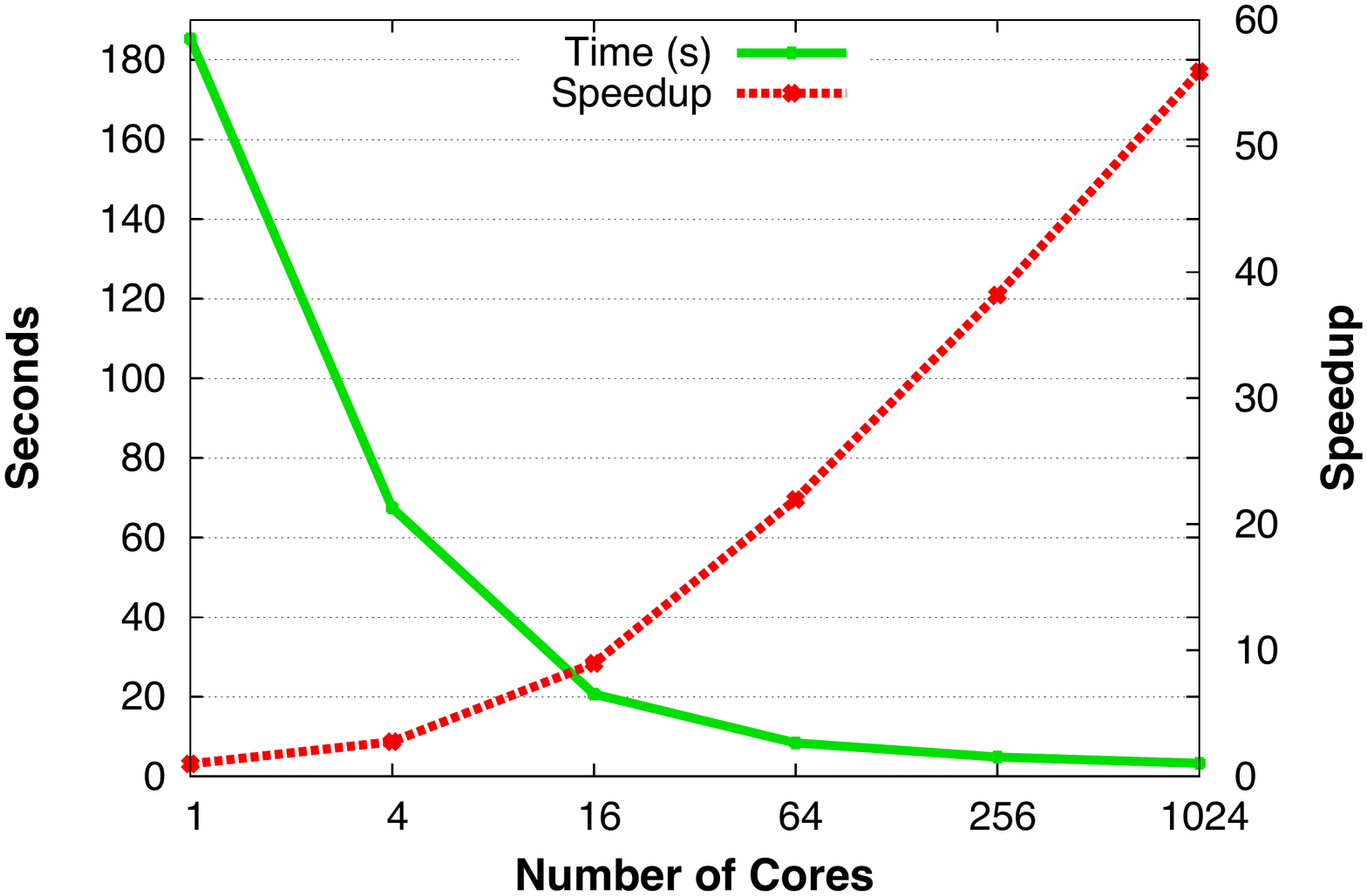}
\caption{Performance and parallel scaling of extracting 10 induced subgraphs from an \rmat\ matrix of scale 22. The x-axis uses a log scale. 
\label{fig:subgraphperf}}
\end{minipage}
\end{figure*}

Our first set of experiments randomly permutes the rows and columns of $\mA$, 
as an example case study for matrix reordering and partitioning. This operation corresponds to relabeling vertices of a graph.
Our second set of experiments explores subgraph extraction by generating 
a random permutation of $1:n$ and dividing it into $k \ll n$ chunks $r_1,\dots,r_k$. 
We then performed $k$ SpRef operations of the form $\mA(r_i,r_i)$, one after another 
(with a barrier in between). In both cases,  the sequential reference is our algorithm itself. 

The performance and parallel scaling of the symmetric random permutation is shown in Figure~\ref{fig:randpermperf}. 
The input is an \rmat\ matrix of scale 22 with approximately 32 million nonzeros in a square matrix of dimension $2^{22}$. 
 Speedup and runtime are plotted on different vertical axes. 
We see that scaling is close to linear up to about 64 processors, and proportional to $\sqrt{p}$ afterwards, 
agreeing with our analysis. 

The performance of subgraph extraction for $k=10$ induced subgraphs, each with $n/k$ randomly chosen vertices, 
is shown in Figure~\ref{fig:subgraphperf}. 
The algorithm performs well in this case too. 
The observed scaling is slightly less than the case of applying a single big permutation, 
which is to be expected since the multiple small subgraph extractions
increase span and decrease available parallelism.

\subsection{Parallel Scaling of SpAsgn}
\label{sec:exp:spasgn}

We benchmarked our parallel SpAsgn code by replacing a portion of the input matrix ($\mA$) with a structurally similar right-hand side matrix ($\mB$).
This operation is akin to replacing a portion of the graph due to a streaming update. The subset of vertices (row and column indices of $\mA$) to be updated  
is chosen randomly. In all the tests, the original graph is an \rmat\ matrix of scale 22 with 32 million nonzeros. The right-hand side (replacement) matrix 
is also an \rmat\ matrix of scales 21, 20, and 19, in three subsequent experiments, replacing 50\%, 25\%, and 12.5\% of the original graph, respectively.  
The average number of nonzeros per row and column are also adjusted for the right hand side matrices to match the nonzero density of the subgraphs they are replacing.

\begin{figure}
\centering
\includegraphics[scale=0.375]{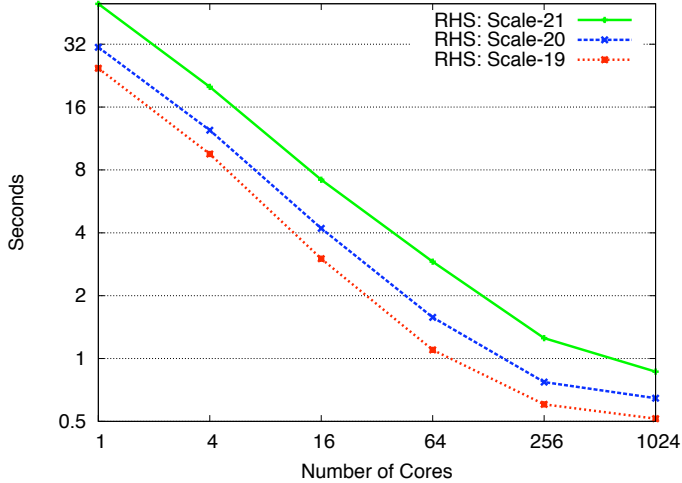} \vspace*{-2ex}
\caption[Observed speedup of of synchronous Sparse SUMMA]
{Observed scaling of the SpAsgn operation $\mA(\mathsf{I},\mathsf{I}) = \mB$ where $\mA \in \mathbb{R}^{\dimN \times \dimN}$ is an \rmat\ matrix of scale 22 and $\mB$ is another \rmat\ matrix whose 
scale is shown in the figure legend. $\mathsf{I}$ is a duplicate-free sequence with entries randomly selected from the range $1...n$; its length matches the dimensions of $\mB$. Both axes are log scale. \label{fig:spasgnexp}}
\end{figure}

The performance of this sparse matrix assignment operation is shown in Figure~\ref{fig:spasgnexp}. Our implementation uses a small number of Combinatorial BLAS routines: A sparse matrix constructor
from distributed vectors, essentially a parallel version of \matlab's {\tt sparse} routine, the generalized elementwise multiplication with direct support for negation,  and parallel SpGEMM implemented
using Sparse SUMMA. 

\subsection{Parallel Scaling of Sparse SUMMA} 
\label{sec:spgemmexp}

We implemented two versions of the 2D parallel SpGEMM algorithms in \Cpp using MPI. The first is directly based on 
Sparse SUMMA and is synchronous in nature, using blocking broadcasts. The second is asynchronous 
and uses one-sided communication in MPI-2. We found the asynchronous implementation to be consistently slower than
the broadcast-based synchronous implementation due to inefficient implementation of one-sided communication routines in MPI. 
Therefore, we only report the performance of the synchronous implementation. The motivation behind the asynchronous approach, performance comparisons, and implementation details, 
can be found in our technical report~\cite[Section 7]{spgemm:10}. On more than 4 cores of Franklin, synchronous implementation consistently outperformed the asynchronous  implementation
by 38-57\%. 

Our sequential \proc{HyperSparseGEMM} routines return a set of intermediate triples that are 
kept in memory up to a certain threshold without being merged immediately. This permits more balanced merging, 
eliminating some unnecessary scans that degraded performance in a preliminary implementation~\cite{icpp08}.

\subsubsection{Square Sparse Matrix Multiplication}
\label{sec:exp:a}

\begin{figure}
\centering
\includegraphics[scale=0.4]{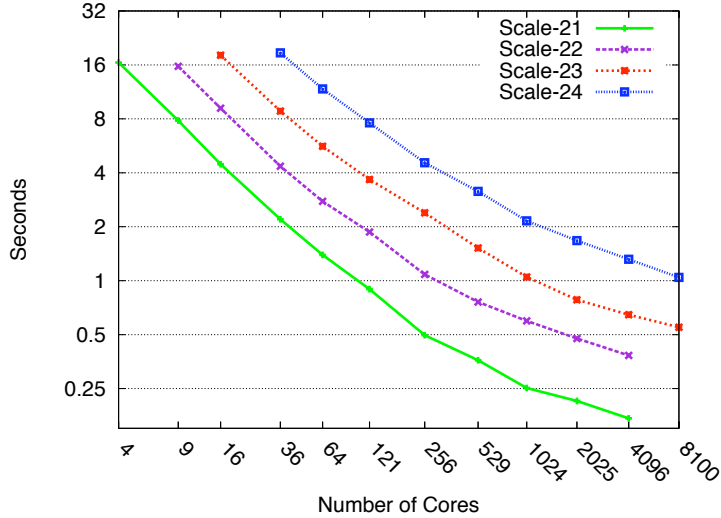} \vspace*{-2ex}
\caption[Observed speedup of of synchronous Sparse SUMMA]
{Observed scaling of synchronous Sparse SUMMA for the \rmat\ $\times$ \rmat\ product on matrices having dimensions $2^{21} - 2^{24}$.
Both axes are log scale. \label{fig:rmatrmatsynch}}
\end{figure}

In the first set of experiments, we multiply two structurally similar \rmat\ matrices. 
This square multiplication is representative of the expansion 
operation used in the Markov clustering algorithm~\cite{mclsimax}. 
It is also a challenging case for our implementation due to the highly skewed
nonzero distribution. 
We performed strong scaling experiments for matrix dimensions ranging from $2^{21}$ to $2^{24}$. 

Figure~\ref{fig:rmatrmatsynch} shows the speedup we achieved. The graph shows linear speedup until around 100 processors; afterwards the 
speedup is proportional to the square root of the number of processors. Both results agree with the theoretical analysis. To illustrate how the scaling 
transitions from linear to $\sqrt{p}$, we drew trend lines on the scale 21 results. As shown in Figure~\ref{fig:transition},  the slope of the log-log curve 
is $0.85$ (close to linear) until 121 cores, and the slope afterwards is $0.47$ (close to $\sqrt{p}$). Figure~\ref{fig:rmatsmall} 
zooms to the linear speedup regime, and shows the performance of our algorithm at lower concurrencies. The speedup and timings are plotted on 
different y-axes of the same graph. 

Our implementation of Sparse SUMMA achieves over 2 billion ``useful flops'' (in double precision)  per second on 8100 cores when multiplying scale 24 \rmat\ matrices. 
Since useful flops are highly dependent on the matrix structure and sparsity, we provide additional statistics for this operation in Table~\ref{tab:rmatstats}. 
Using matrices with more nonzeros per row and column will certainly yield higher performance rates (in useful flops). The gains from sparsity are clear if one considers 
dense flops that would be needed if these matrices were stored in a dense format. For example, multiplying two dense scale 24 matrices
requires 9444 exaflops.

\begin{figure*}
\centering
\begin{minipage}[t]{.48\textwidth}
\vspace{40pt}
\scalebox{0.82}{
\begin{tabular}{c|cccc}
Scale 		&  $\dnnz(\mA)$ 	&  $\dnnz(\mB)$   &  $\dnnz(\mC)$   	& $\flops$ \\
\hline
21	& 16.3 	& 16.3	& 123.9		& 253.2	\\
22	& 32.8	& 32.8 	& 257.1		& 523.7	\\
23	& 65.8	& 65.8	& 504.3		& 1021.3	\\
24	& 132.1 	& 132.1 	& 1056.9		& 2137.4 	\\
\end{tabular}}
\caption{Statistics about \rmat\ product $\mC = \mA \cdot \mB$. All numbers (except scale) are in millions.
\label{tab:rmatstats}}
\end{minipage}
\hspace{0.2cm}
\begin{minipage}[t]{.48\textwidth} \centering
\vspace{0pt}
\includegraphics[width=\textwidth]{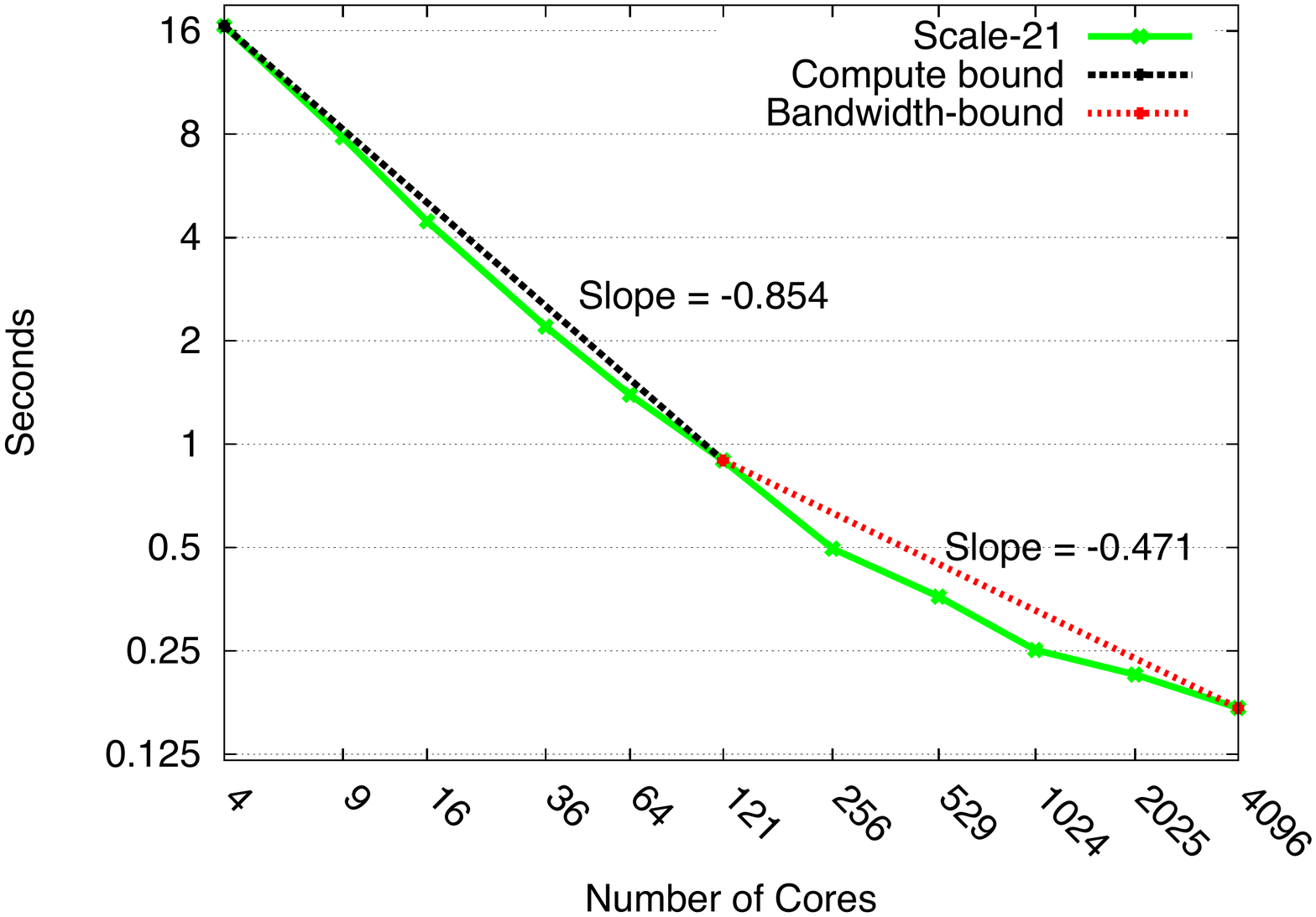}
\vspace{-12pt} 
\caption{Demonstration of two scaling regimes for scale 21 \rmat\ product.
 \label{fig:transition}}
\end{minipage}
\end{figure*}
 
 Figure~\ref{fig:commcomp} breaks down the time spent in communication and computation when multiplying
two \rmat\ graphs of scale 24. We see that computation scales much better than communication 
(over $90$x reduction when going from 36 to 8100 cores), implying that SpGEMM is communication bound for large concurrencies. 
For example, on 8100 cores, 83\% of the time is spent in communication. Communication times include the overheads due to 
synchronization and load imbalance.

Figure~\ref{fig:commcomp} also shows the effect of different blocking sizes. 
Remember that each processor owns a submatrix of 
size $n/\sqrt{p}$-by-$n/\sqrt{p}$.
On the left, the algorithm completes in $\sqrt{p}$ stages, each time broadcasting its whole local matrix.
On the right,  the algorithm completes in $2\sqrt{p}$ stages, each time broadcasting half of its local matrix. 
We see that while communication costs are not affected, the computation slows down by 1-6\% when doubling the number of stages.
This difference is due to the costs of splitting the input matrices before the multiplication and reassembling them afterwards, which
is small because splitting and reassembling are simple scans over the data whose costs are dominated by the cost of multiplication itself.

\begin{figure*}
\centering
\begin{minipage}[t]{.47\textwidth}\centering
\includegraphics[scale=0.26]{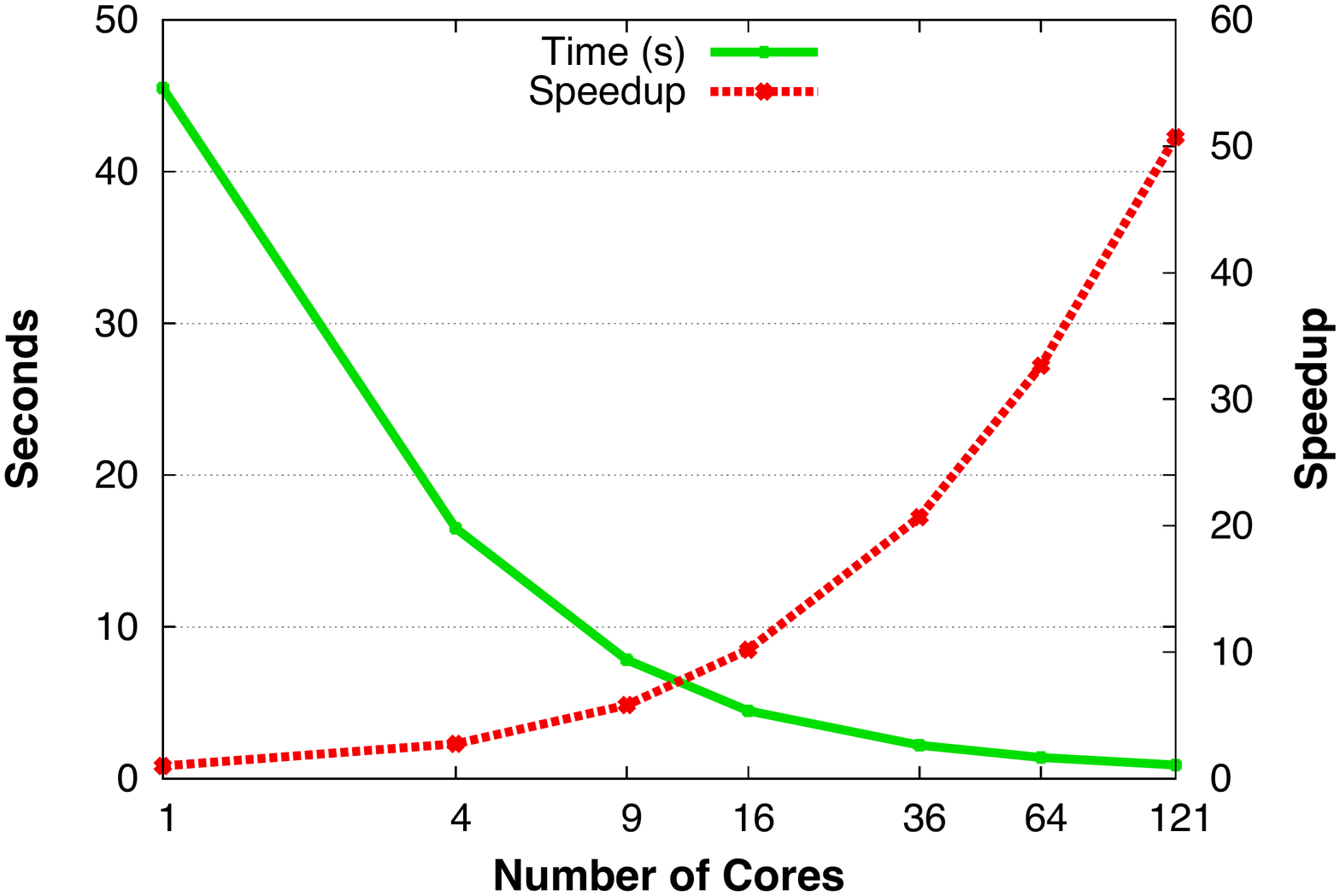}
\caption{Performance and scaling of Sparse SUMMA at lower concurrencies (scale 21 inputs). The x-axis uses a log scale.
 \label{fig:rmatsmall}}
\end{minipage}
\hspace{0.2cm}
\begin{minipage}[t]{.47\textwidth}\centering
\includegraphics[scale=0.55]{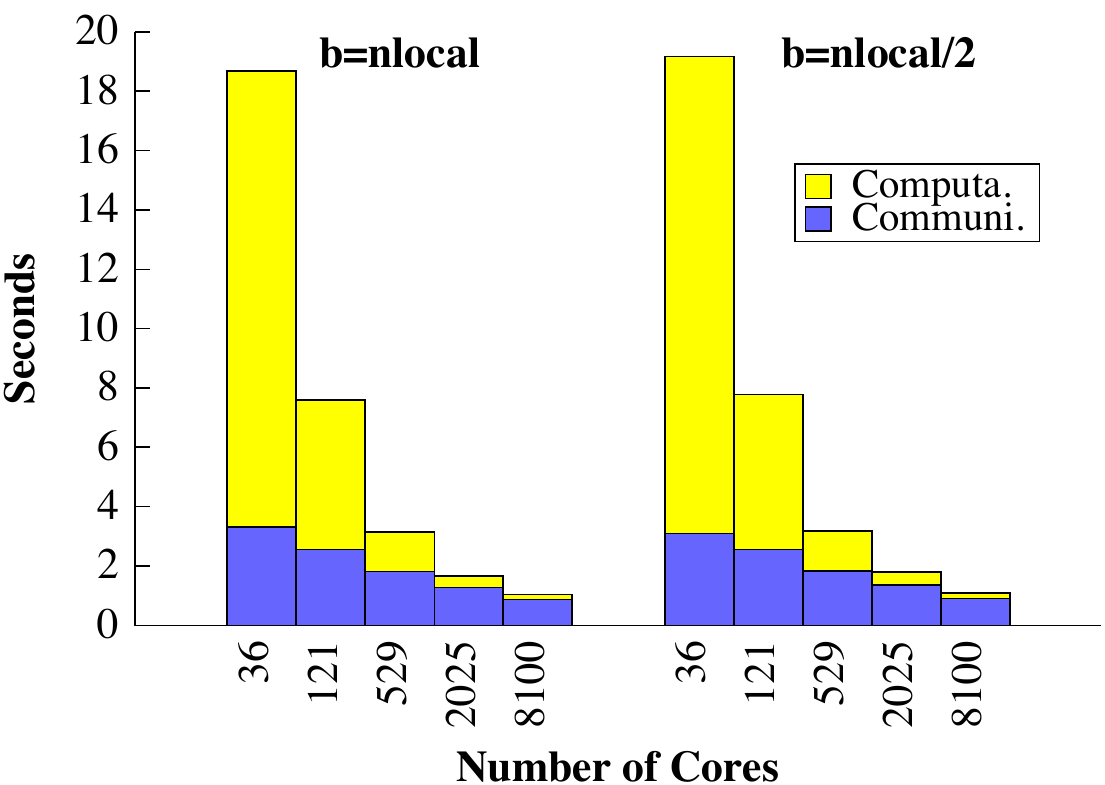}
\caption{Communication and computation breakdown, at various concurrencies and two blocking sizes (scale 24 inputs).  \label{fig:commcomp}}
\end{minipage}
\end{figure*}

\subsubsection{Multiplication with the Restriction Operator}
\label{sec:exp:c}

Multilevel methods are widely used in the solution of numerical and combinatorial problems~\cite{Teng99multilevel}. 
Such methods construct smaller problems by successive coarsening. 
The simplest coarsening is graph contraction: 
a contraction step chooses two or more vertices in the original graph $G$ to become a single 
aggregate vertex in the contracted graph $G'$. 
The edges of $G$ that used to be incident to any of the vertices forming the 
aggregate become incident to the new aggregate vertex in $G'$.
 
Constructing coarse grid during the V-cycle of algebraic multigrid~\cite{multigrid} or graph partitioning~\cite{brucegraph95} 
is a generalized graph contraction operation. Different algorithms need different coarsening operators. For example, a weighted 
aggregation~\cite{cedricilya09} might be preferred for partitioning problems. In general, coarsening can be represented 
as multiplication of the matrix representing the original fine domain (grid, graph, or hypergraph) by the restriction operator.

In these experiments, 
we use a simple restriction operation to perform graph contraction. 
Gilbert et al.~\cite{unifiedstarp} describe how to perform contraction using SpGEMM. 
Their algorithm creates a special sparse matrix $\mathbf{S}$ with $\dimN$ nonzeros. 
The triple product $\mathbf{S} \mA \mathbf{S}\transpose$ contracts the whole graph at once. Making $\mathbf{S}$ smaller in the first dimension while keeping the number of nonzeros same
changes the restriction order. For example, we contract the graph into half by using $\mathbf{S}$ having dimensions $\dimN /2 \times \dimN$, which is said to be of order 2.

\begin{figure*}
\centering
\begin{minipage}[t]{.48\textwidth}\centering
\includegraphics[scale=0.27]{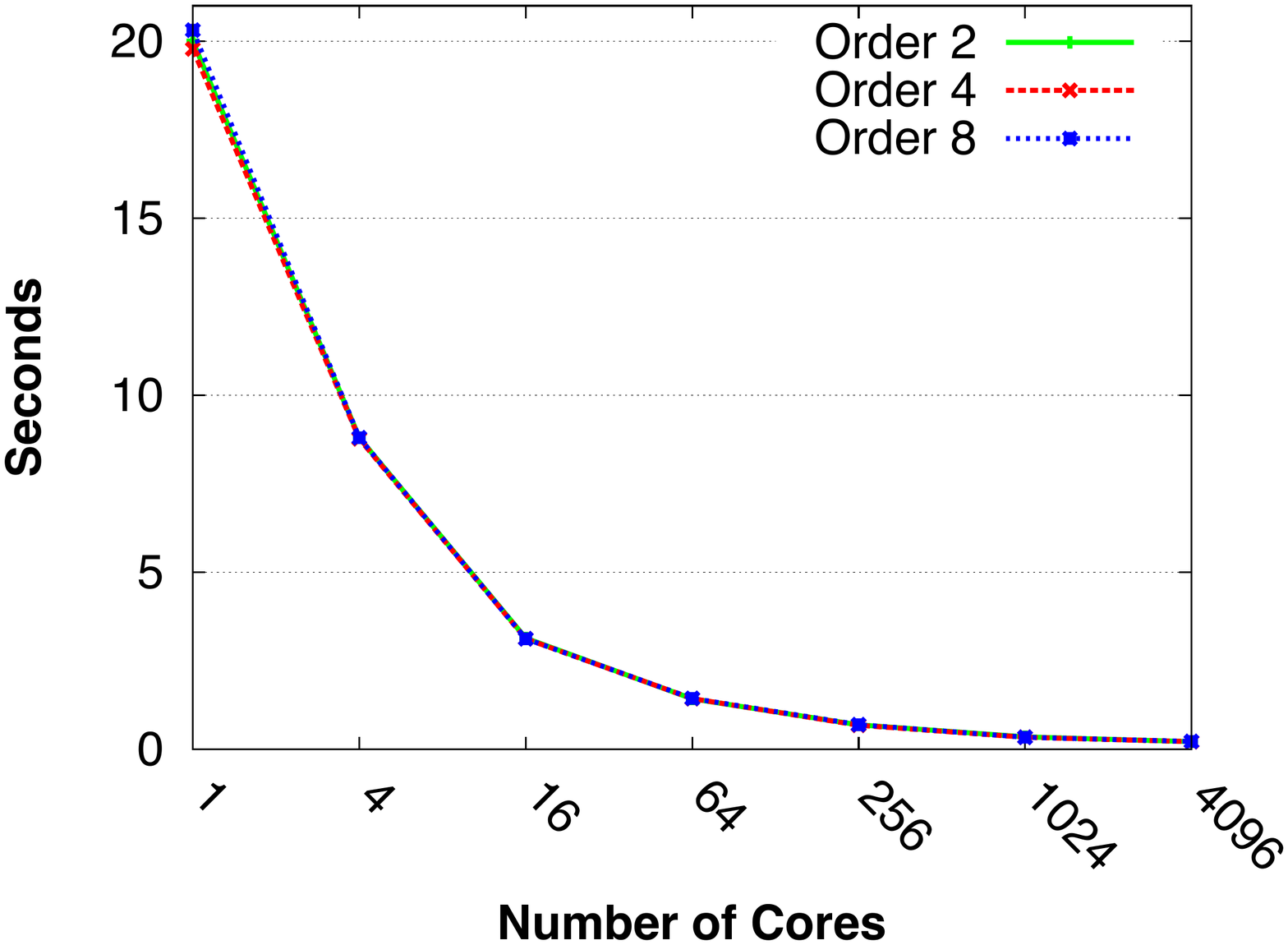}
\caption{Strong scaling of $\mB \gets \mA \mathbf{S}\transpose$, multiplying 
scale 23 \rmat\ matrices with the restriction operator on the right. 
The x-axis uses a log scale. \label{fig:galerkin}}
\end{minipage}
\hspace{0.2cm}
\begin{minipage}[t]{.48\textwidth}\centering
\includegraphics[scale=0.56]{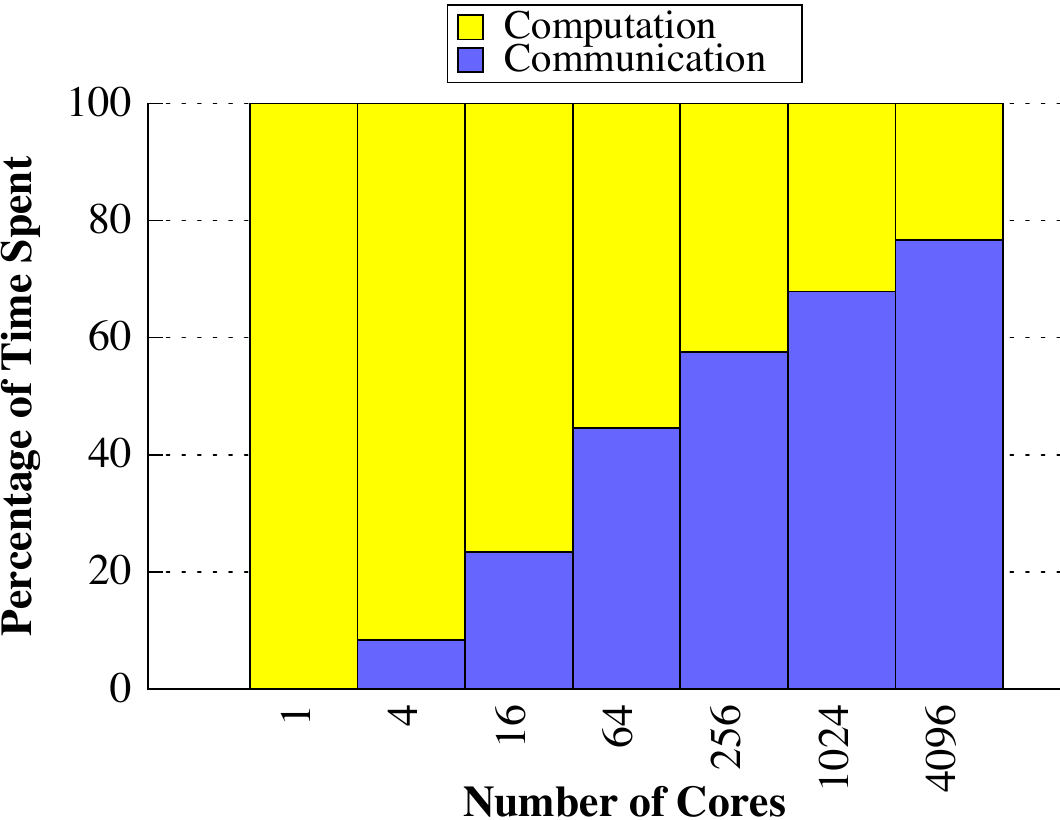}
\caption{Normalized communication and computation breakdown for multiplying 
scale 23 \rmat\ matrices with the restriction operator of order 4.  \label{fig:g_commcomp}}
\end{minipage}
\end{figure*}

Figure~\ref{fig:galerkin} shows `strong scaling' of $\mA \mathbf{S}\transpose$ operation for \rmat\ graphs of scale 23. 
We used restrictions of order 2, 4, and 8. Changing the interpolation order results in minor (less than 5\%) changes in performance, as shown by
the overlapping curves. This is further evidence that our algorithm's complexity is independent of the
matrix dimension, because interpolation order has a profound effect on the dimension of the right hand side matrix, but it does not change the expected $\flops$ and 
numbers of nonzeros in the inputs (it may slightly decrease the number of nonzeros in the output). The experiment shows scaling up to $4096$ processors. 
Figure~\ref{fig:g_commcomp} shows the breakdown of time (as percentages) spent on remote communication and local computation steps. 

Figures~\ref{fig:fullgalerkinleft} and~\ref{fig:fullgalerkinright} show `strong scaling' of the full restriction operation $\mathbf{S} \mA \mathbf{S}\transpose$ of order 8, 
using different parenthesizations for the triple product. The results show that our code achieves $110\times$ speedup on 1024-way concurrency and 
$163\times$ speedup on 4096-way concurrency, and the 
performance is not affected by the different parenthesizations.

\begin{figure}
\centering 
\subfloat[Left to right evaluation: $(\mathbf{S} \mA) \mathbf{S}\transpose$]{\label{fig:fullgalerkinleft} \includegraphics[width=0.49\textwidth]{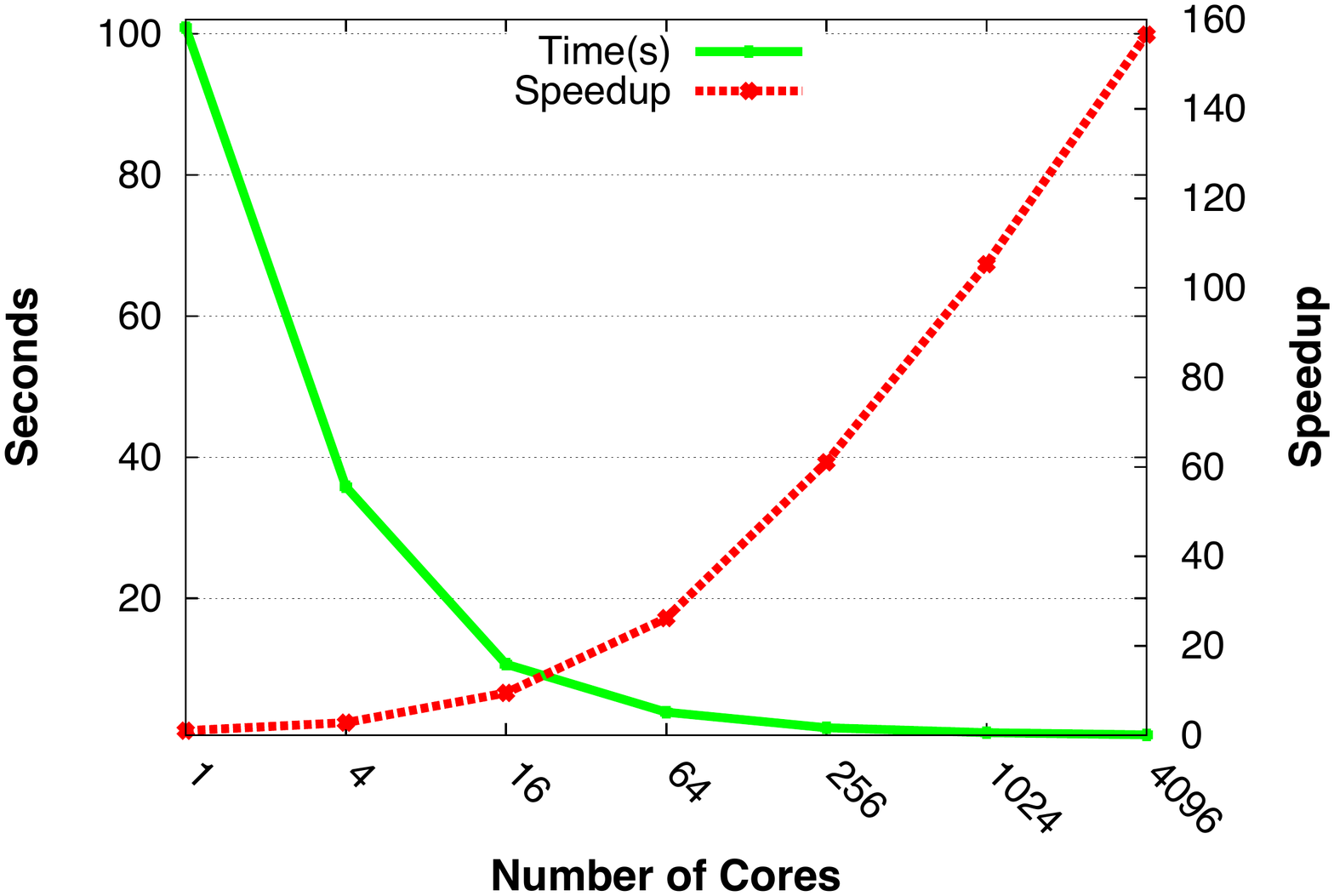}} 
\subfloat[Right to left evaluation: $\mathbf{S} (\mA \mathbf{S}\transpose)$]{\label{fig:fullgalerkinright} \includegraphics[width=0.49\textwidth]{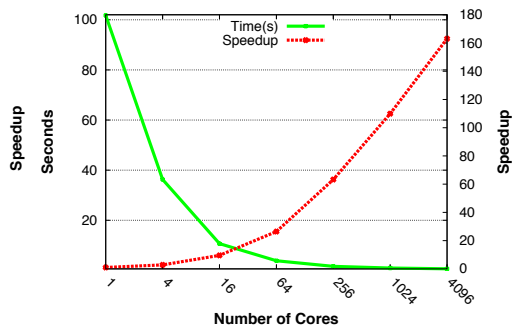}} 
\caption{The full restriction operation of order 8 applied to a scale 23 \rmat\ matrix. }  
\label{galerkintwosided}
\end{figure}

Figure~\ref{fig:restrictreal} shows the performance of full operation on real matrices from physical problems. Both matrices have a full diagonal that remains full after symmetric
permutation. Due to the 2D decomposition, processors responsible for the diagonal blocks typically have more work to do. 
For load-balancing and performance reasons, we split these matrices into two pieces $\mA = \mathbf{D}+\mathbf{L}$ where $\mathbf{D}$ is the diagonal piece and $\mathbf{L}$ is the off-diagonal piece. 
The restriction of rows becomes $\mathbf{S} \mA = \mathbf{S} \mathbf{D}  + \mathbf{S} \mathbf{L}$. Scaling the columns of $\mathbf{S}$  with the diagonal of $\mathbf{D}$ performs the former multiplication, and the latter multiplication uses Sparse SUMMA algorithm described in our paper. This splitting approach especially improved the scalability of restriction on Freescale1 matrix, because it is 
much sparser that GHS\_psdef/ldoor, which does not face severe load balancing issues.  Order 2 restriction shrinks the number of nonzeros from 17.0 to 15.3 million for Freescale1, and
from 42.5 to 42.0 million for GHS\_psdef/ldoor.

\begin{figure}[ht]
\centering 
\subfloat[Freescale1 matrix.]{\label{fig:restrict-freescale} \includegraphics[width=0.48\textwidth]{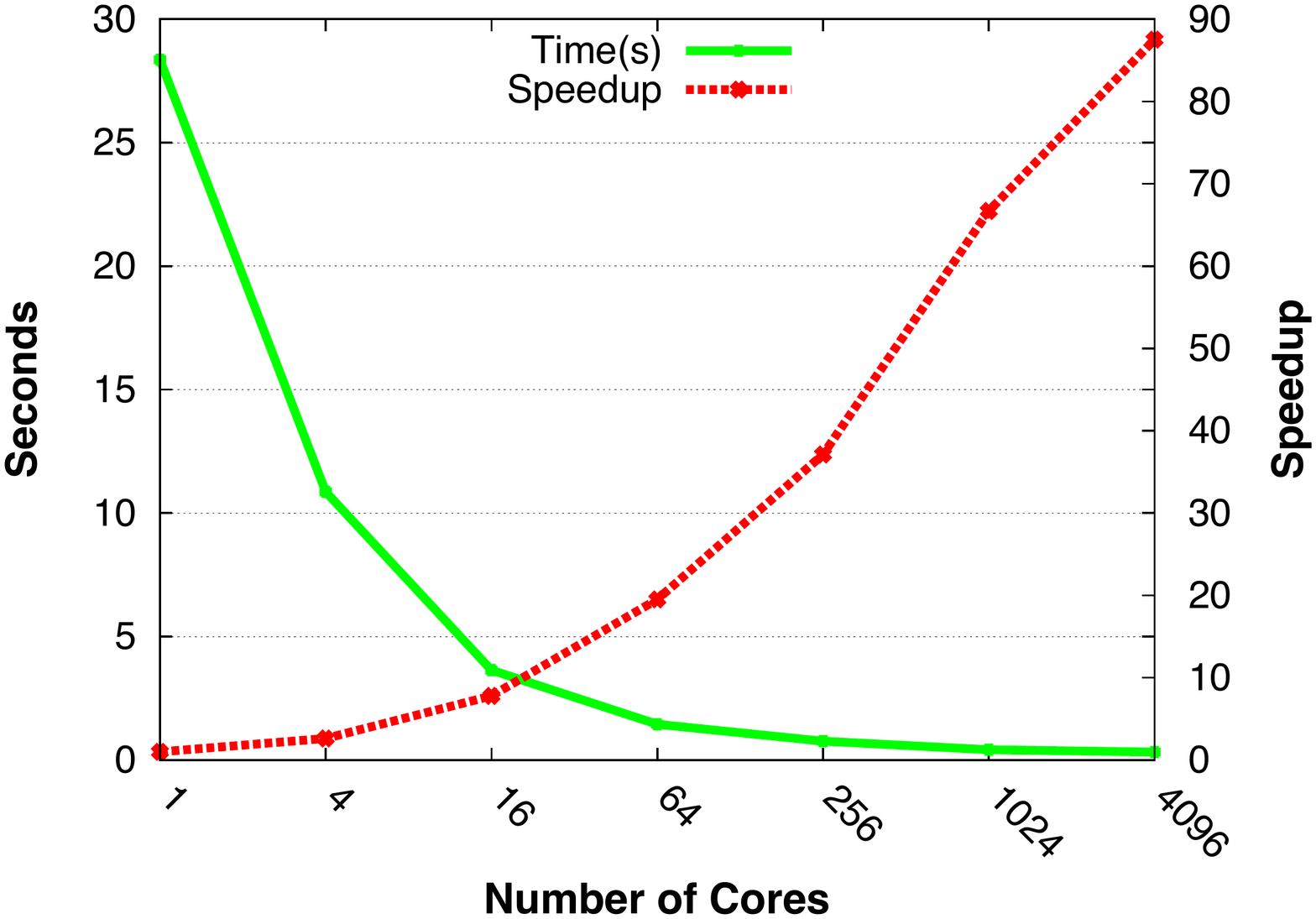}} 
\subfloat[GHS\_psdef/ldoor matrix.]{\label{fig:restrict-ldoor} \includegraphics[width=0.48\textwidth]{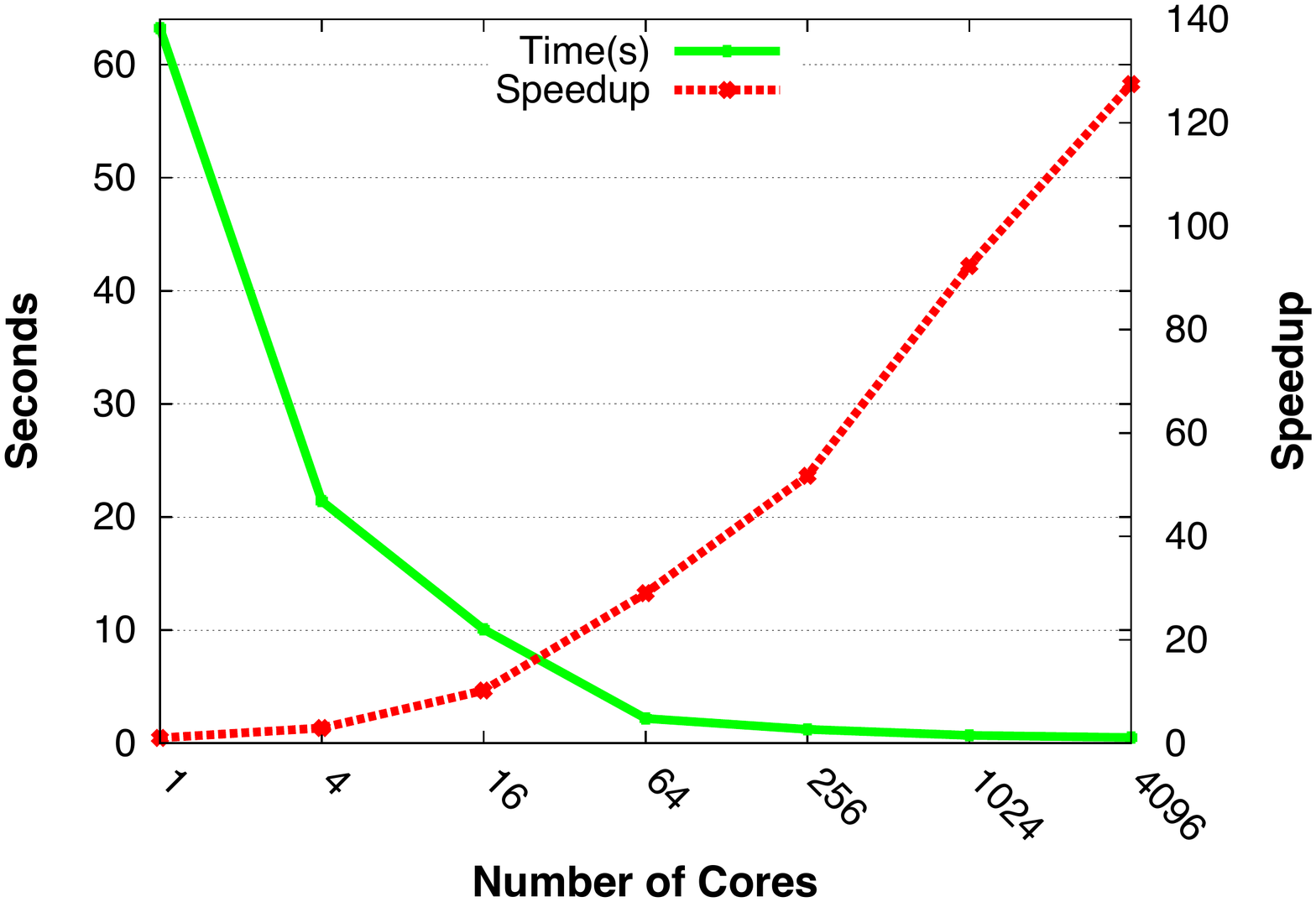}} 
\caption{Performance and strong scaling of Sparse SUMMA implementation for the full restriction of order 2 ($\mathbf{S} \mA \mathbf{S}\transpose$) on real matrices from physical problems. }  
\label{fig:restrictreal}
\end{figure}

\subsubsection{Tall Skinny Right Hand Side Matrix}
\label{sec:exp:b}


The last set of experiments multiplies \rmat\ matrices by tall skinny matrices of varying sparsity. 
This computation is representative of the parallel breadth-first search 
that lies at the heart of our distributed-memory betweenness centrality implementation~\cite{combblas}. 
This set indirectly examines the sensitivity to sparsity as well, 
because we vary the sparsity of the right hand side matrix
from approximately $1$ to $10^5$ nonzeros per column, in powers of 10. 
In this way, we imitate the patterns of the level-synchronous breadth-first search from multiple source vertices
where the current frontier can range from a few vertices to 
hundreds of thousands~\cite{bfs:11}.

For our experiments, the \rmat\ matrices on the left hand side have $d_1 = 8$ nonzeros 
per column and their dimensions vary from $\dimN=2^{20}$ to $\dimN=2^{26}$. 
The right-hand side is an \erdosrenyi\ matrix of dimensions $\dimN$-by-$\dimK$, and the number of
nonzeros per column, $d_2$, is varied from $1$ to $10^5$, in powers of $10$. 
The right-hand matrix's width $\dimK$ varies from $128$ to $8192$, growing proportionally to its length $\dimN$, hence keeping
the matrix aspect ratio constant at $n/k=8192$. Except for the $d_2 =10^5$ case, the \rmat\  matrix has more nonzeros than the right-hand matrix. 
In this computation, the total work is $W = O(d_1 d_2 \dimK)$, the total memory consumption is $M = O(d_1 \dimN + d_2 \dimK)$, 
and the total bandwidth requirement is $O(M \sqrt{p})$. 

We performed weak scaling experiments where memory consumption per processor is constant. 
Since $M = O(d_1 \dimN + d_2 \dimK)$, this is achieved by keeping both $\dimN / p = 2^{14}$ and $\dimK / p = 2$ constant. 
Work per processor is also constant. However, per-processor bandwidth requirements of this algorithm increases by a factor of $\sqrt{p}$. 

Figure~\ref{fig:weakskinny} shows a performance graph in three dimensions. 
The timings for each slice along the XZ-plane (i.e.\ for every 
$d_2=\{1,10,...,10^5\}$ contour) are normalized to the running time on 64 processors. 
We do not cross-compare the absolute performances for 
different $d_2$ values, as our focus in this section is parallel scaling. In line with the theory, we observe 
the expected $\sqrt{p}$ slowdown due to communication costs.  

The performance we achieved for these large scale experiments, where we ran our code on
up to $4096$ processors, is remarkable. It also shows that our implementation does not incur
any significant overheads since it does not deviate from the $\sqrt{p}$ curve. 

\begin{figure}
\centering
\includegraphics[scale=0.6]{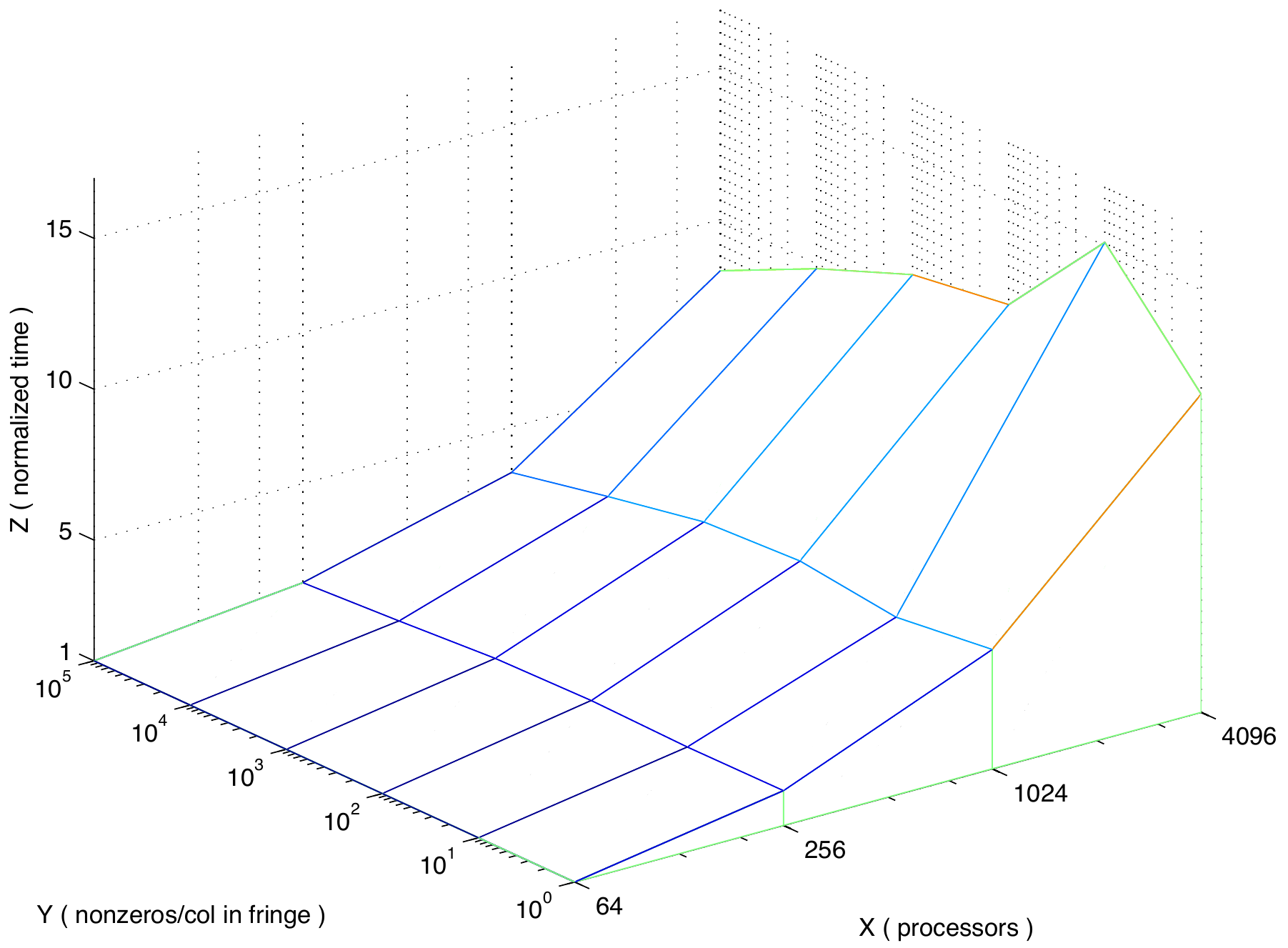} \vspace*{-2ex}
\caption[Weak scaling of \rmat\ times a tall skinny \erdosrenyi\ matrix]
{Weak scaling of \rmat\ times a tall skinny \erdosrenyi\ matrix. X (processors) and Y (nonzeros per column on fringe) axes are logarithmic, whereas Z 
(normalized time) axis is linear. \label{fig:weakskinny}}
\end{figure}

\subsection{Comparison with Trilinos}
The EpetraExt package of Trilinos can multiply two distributed sparse matrices in parallel. 
Trilinos can also permute matrices and extract submatrices through its Epetra\_Import and Epetra\_Export classes. These packages of Trilinos use a 1D data layout. 

For SpGEMM, we compared the performance of Trilinos's EpetraExt package with ours on two scenarios. In the first scenario, we multiplied two \rmat\
matrices as described in Section~\ref{sec:exp:a}, and in the second scenario, we multiplied an \rmat\ matrix with the restriction operator of order 8 on 
the right as described in Section~\ref{sec:exp:c}. 
 
Trilinos ran out of memory when multiplying \rmat\ matrices of scale larger than 21, or when using more than 256
processors. Figure~\ref{fig:gemm_epetra} shows SpGEMM timings for up to 256 processors on scale 21 data. 
Sparse SUMMA is consistently faster than Trilinos's implementation, with the gap increasing with the processor count, reaching $66\times$ on 256-way concurrency. 
Sparse SUMMA is also more memory efficient as Trilinos's matrix multiplication ran out of memory for $p =1$ and $p=4$ cores. The sweet spot for Trilinos seems to be around 120 cores, 
after which its performance degrades significantly. 

\begin{figure}[ht]
\centering 
\subfloat[\rmat\ $\times$ \rmat\ product (scale 21).]{\label{fig:gemm_epetra} \includegraphics[width=0.48\textwidth]{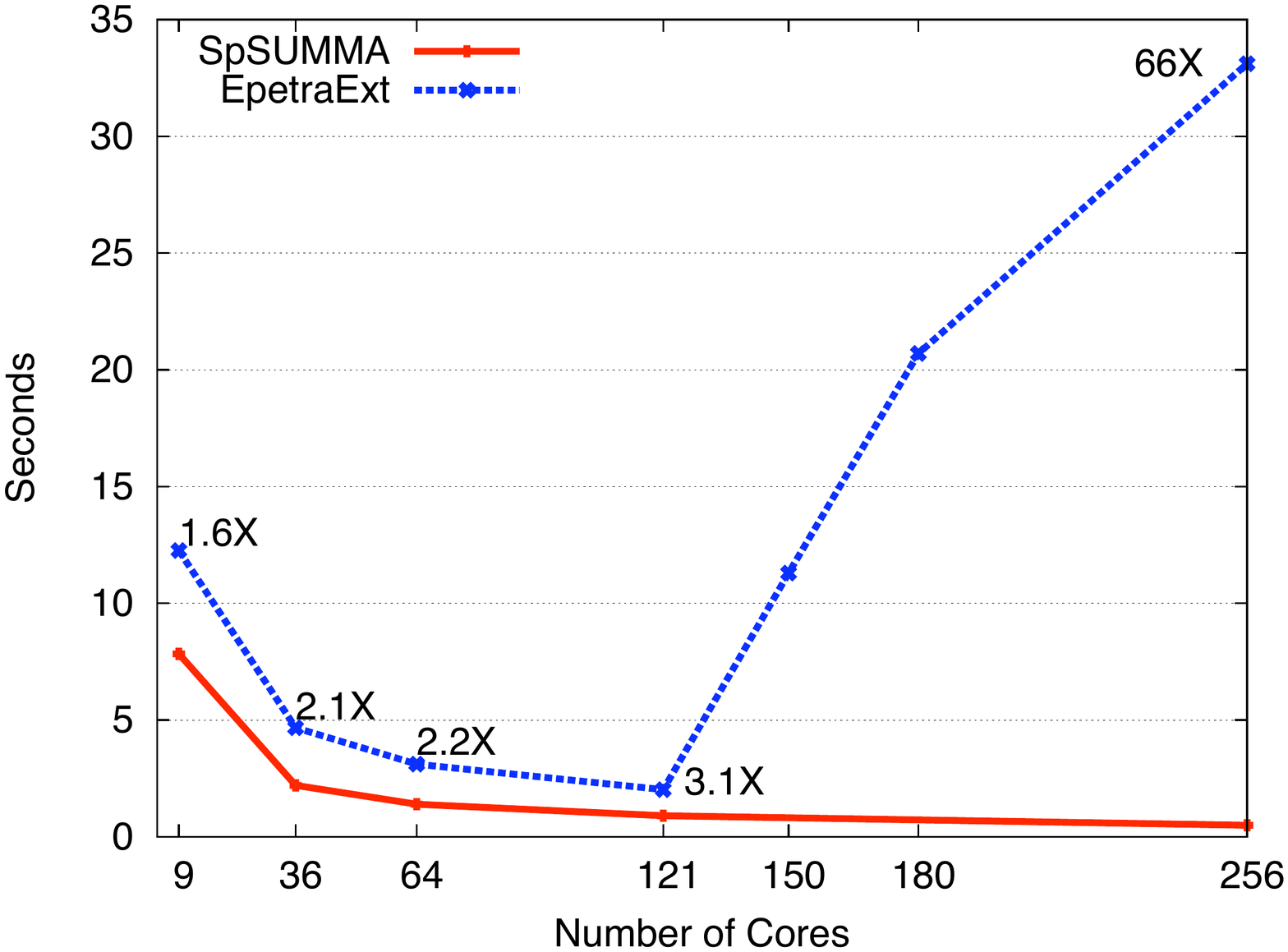}} 
\subfloat[Multiplication of an \rmat\ matrix of scale 23 with the restriction operator of order 8.]{\label{fig:galerkin_epetra} \includegraphics[width=0.48\textwidth]{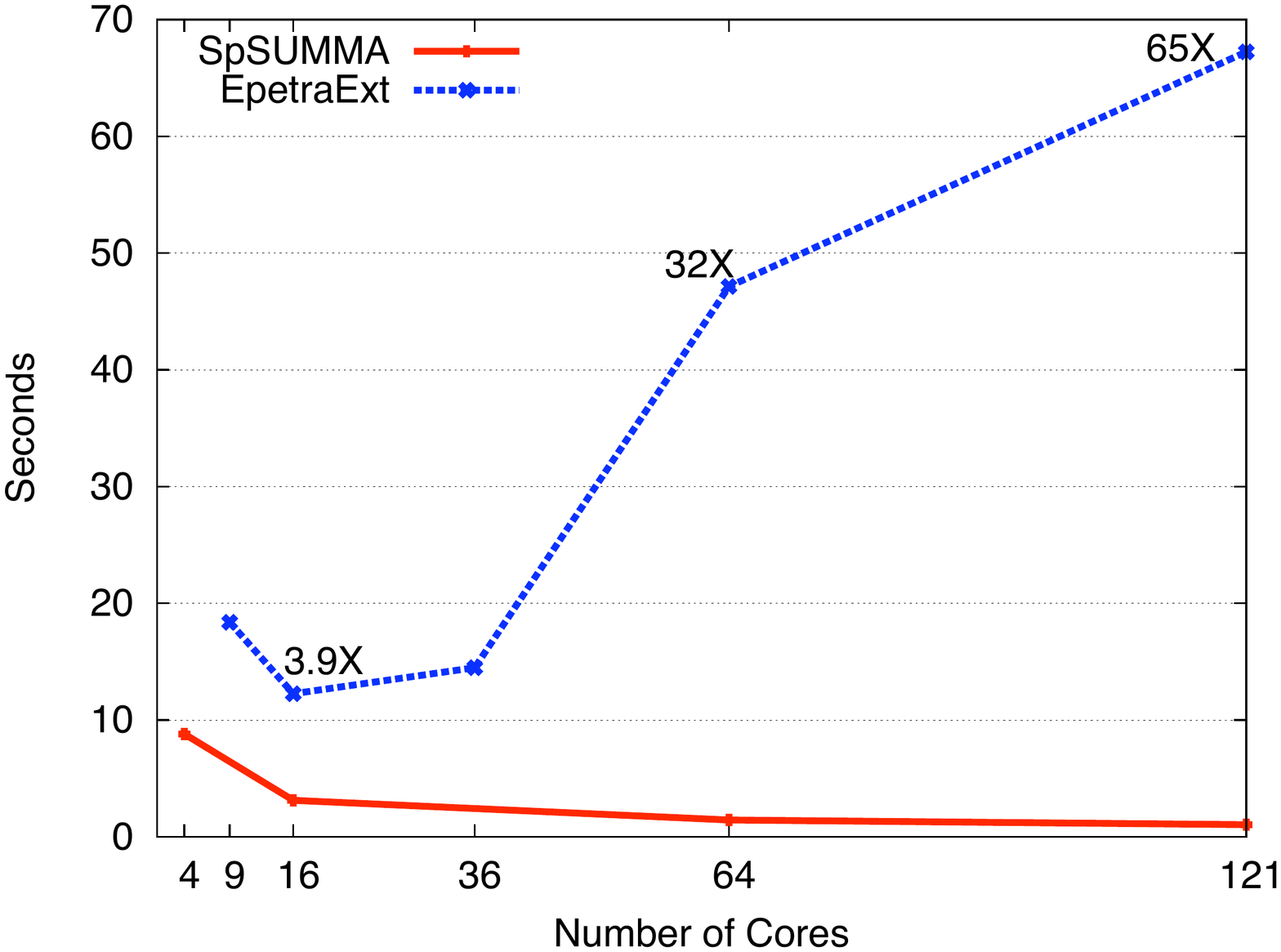}} 
\caption{Comparison of SpGEMM implementation of Trilinos's EpetraExt package with our Sparse SUMMA implementation using synthetically generated matrices. The data labels on
the plots show the speedup of Sparse SUMMA over EpetraExt.}  
\label{epetragemm}
\end{figure}

\begin{figure}[ht]
\centering 
\subfloat[Freescale matrix.]{\label{fig:freescale_epetra} \includegraphics[width=0.48\textwidth]{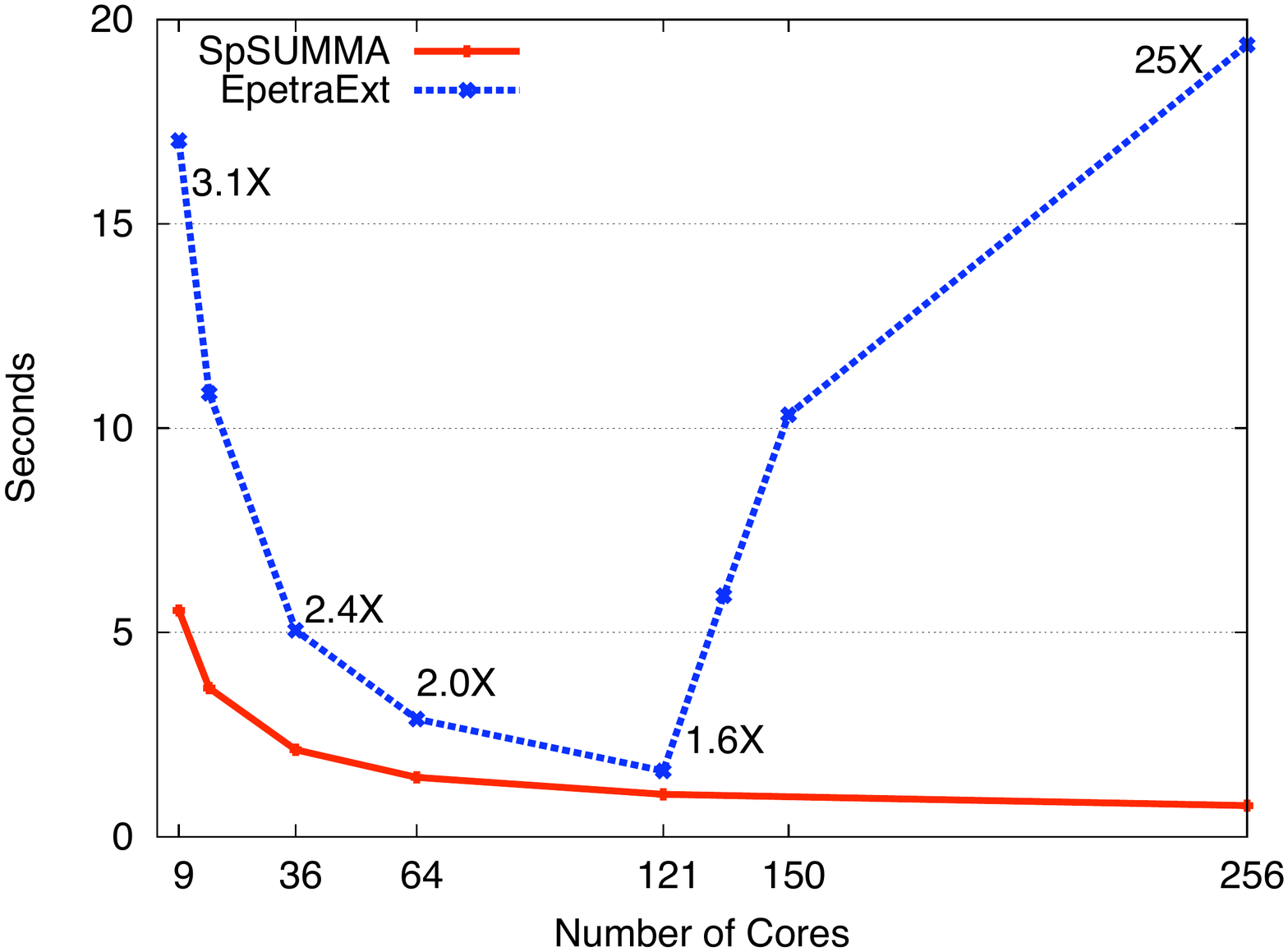}} 
\subfloat[GHS\_psdef/ldoor matrix.]{\label{fig:ldoor_epetra} \includegraphics[width=0.48\textwidth]{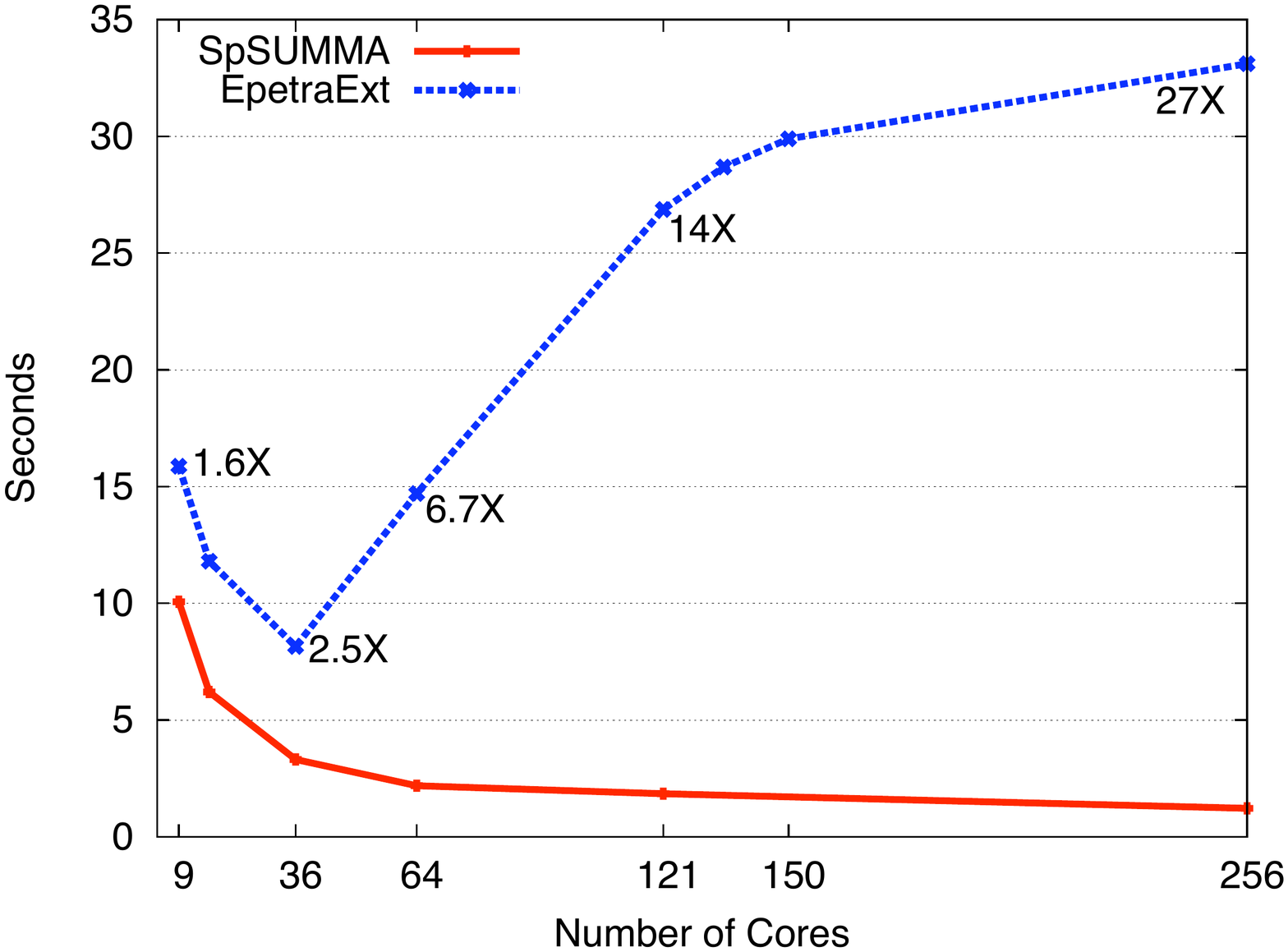}} 
\caption{Comparison of Trilinos's EpetraExt package with our Sparse SUMMA implementation for the full restriction of order 2 ($\mathbf{S} \mA \mathbf{S}\transpose$) on real matrices. The data labels on
the plots show the speedup of Sparse SUMMA over EpetraExt.}  
\label{epetrareal}
\end{figure}

In the case of multiplying with the restriction operator, the speed and scalability of our implementation over EpetraExt is even more pronounced. 
This is shown in Figure~\ref{fig:galerkin_epetra} where our code is 65X faster even on just 121 processors. Remarkably, our codes scales up to 4096
cores on this problem, as shown in Section~\ref{sec:exp:c}, while EpetraExt starts to slow down just beyond 16 cores. We also compared Sparse SUMMA with 
EpetraExt on matrices coming from physical problems, and the results for the full restriction operation ($\mathbf{S} \mA \mathbf{S}\transpose$) are shown in Figures~\ref{epetrareal}.

In order to benchmark Trilinos's sparse matrix indexing capabilities, we used EpetraExt's permutation class that can permute row or columns
of an Epetra\_CrsMatrix by creating a map defined by the permutation, followed by an Epetra\_Export operation to move data from the input object into the 
permuted object. We applied a random symmetric permutation on a \rmat\ matrix, as done in Section~\ref{sec:exp:spref}. 
Trilinos shows good scaling up to 121 cores but then it starts slowing down as concurrency increases, eventually becoming over 
$10\times$ slower than our SpRef implementation at 169 cores.

\section{Conclusions and Future Work}
We presented a flexible parallel sparse matrix-matrix multiplication (SpGEMM) algorithm, Sparse SUMMA, which scales to thousands of processors in distributed memory.
We used Sparse SUMMA as a building block to design and implement scalable parallel routines for sparse matrix indexing (SpRef) and assignment (SpAsgn). These operations 
are important in the context of graph operations. They yield elegant algorithms for coarsening graphs by edge contraction as in Figure~\ref{fig:contraction}, 
extracting subgraphs, performing parallel breadth-first search from multiple source vertices, and performing batch updates to a graph.  

We performed parallel complexity analyses of our primitives. In particular, using SpGEMM as a building block enabled the most general analysis of SpRef. 
Our extensive experiments confirmed that our implementation achieves the performance predicted by our analyses. 

\begin{figure}[t]
\centering  
\includegraphics[scale=0.45]{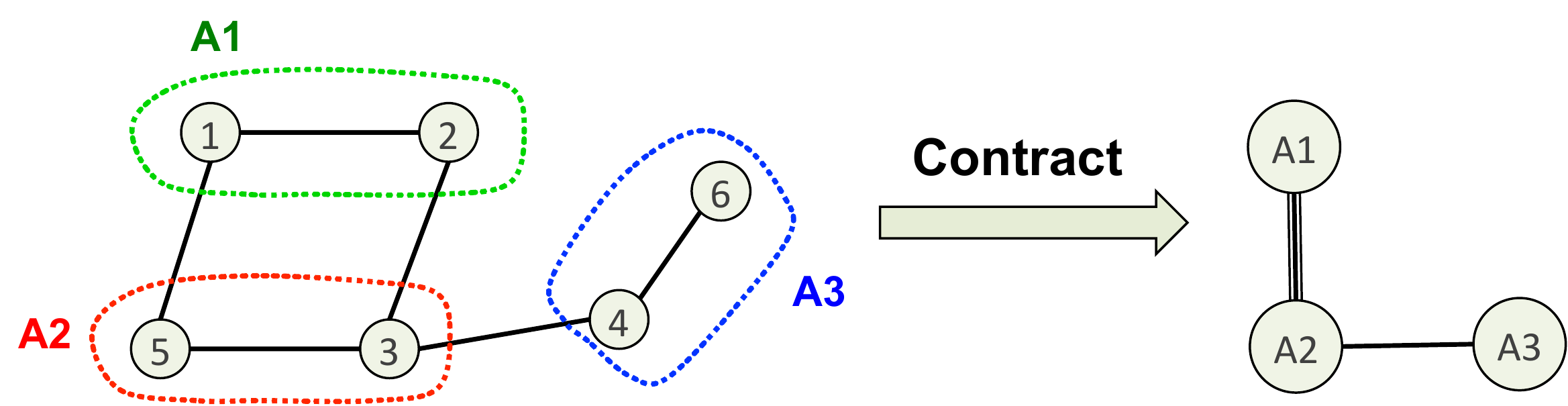}  
\caption{Example of graph coarsening using edge contraction, which can be implemented via a triple sparse matrix product $\mathbf{S} \mA \mathbf{S}\transpose$ 
where $\mathbf{S}$ is the restriction operator.} 
\label{fig:contraction}
\end{figure}

Our SpGEMM routine might be extended to handle matrix chain products. 
In particular, the sparse matrix triple product is used in the coarsening 
phase of algebraic multigrid~\cite{pamgdemmel}. 
Sparse matrix indexing and parallel graph contraction also require sparse matrix triple products~\cite{unifiedstarp}.
Providing a first-class primitive for sparse matrix chain products would eliminate temporary intermediate products and allow more optimization, 
such as performing structure prediction~\cite{cohen} 
and determining the best order of multiplication based on the sparsity structure of the matrices.

As we show in Section~\ref{sec:spgemmexp}, our implementation spends more than 75\% of its time in inter-node communication after 2000 processors.
Scaling to higher concurrencies require asymptotic reductions in communication volume. We are working on developing practical 
communication-avoiding algorithms~\cite{ca_demmel} for sparse matrix-matrix multiplication (and consequently for sparse matrix indexing and assignment), which might require inventing efficient novel sparse data structures to support such algorithms.

Our preliminary experiments suggest that synchronous algorithms for SpGEMM cause considerably higher load imbalance than
asynchronous ones~\cite[Section 7]{spgemm:10}. In particular, a truly one-sided implementation can perform up to 46\% faster 
when multiplying two \rmat\ matrices of scale 20 using 4000 processors. We will experiment with partitioned global address space (PGAS) languages, such as
UPC~\cite{upc}, because the current implementations of one-sided MPI-2 were not able to deliver satisfactory performance when used to implement 
asynchronous versions of our algorithms. 

As the number of cores per node increases due to multicore scaling, so does the contention on the network interface card.
Without hierarchical parallelism that exploits the faster on-chip network, the flat MPI parallelism will be unscalable 
because more processes will be competing for the same network link. 
Therefore, designing hierarchically parallel SpGEMM and SpRef algorithms is an important future direction.
 
\bibliographystyle{plain}
\bibliography{aydinthesis_jan11}
 
\end{document}